\newcommand{\RR}{\mathbb{R}}
\newcommand{\NN}{\mathbb{N}}
\newenvironment{mechanism}[1]
  {\innermech}
  {\endinnermech}
\DeclareMathOperator*{\EE}{\mathbb{E}}
\begin{document}

% Page heads
%\markboth{A. Filos-Ratsikas et al.}{Social welfare in one-sided matchings: Random priority and beyond}

% Title portion
\title{Social welfare in one-sided matchings: Random priority and beyond\thanks{The authors acknowledge support from the Danish National Research Foundation and The National Science
Foundation of China (under the grant 61061130540) for the Sino-Danish Center for the Theory of Interactive Computation,
within which this work was performed. The authors also acknowledge support from the Center for Research
in Foundations of Electronic Markets (CFEM), supported by the Danish Strategic Research Council. Jie Zhang also acknowledges support from ERC Advanced Grant 321171 (ALGAME).}
}

\author{
Aris Filos-Ratsikas\inst{1} \and S\o ren Kristoffer Stiil Frederiksen\inst{1} \and Jie Zhang\inst{2}
}

\institute{Department of Computer Science, Aarhus University \\
\email{\{filosra,ssf\}@cs.au.dk}\\
\and Department of Computer Science, University of Oxford\\
\email{jie.zhang@cs.ox.ac.uk}}

\maketitle
\begin{abstract}
We study the problem of approximate social welfare maximization (without money) in one-sided matching problems when agents have unrestricted cardinal preferences over a finite set of items. \emph{Random priority} is a very well-known \emph{truthful-in-expectation} mechanism for the problem. We prove that the approximation ratio of random priority is $\Theta(n^{-1/2})$ while no truthful-in-expectation mechanism can achieve an approximation ratio better than $O(n^{-1/2})$, where $n$ is the number of agents and items. Furthermore, we prove that the approximation ratio of all ordinal (not necessarily truthful-in-expectation) mechanisms is upper bounded by $O(n^{-1/2})$, indicating that random priority is asymptotically the best truthful-in-expectation mechanism and the best ordinal mechanism for the problem. 
\end{abstract}

\section{Introduction} \label{sec:intro}

We study the problem of approximate social welfare maximization (without money) in one-sided matching problems when agents have unrestricted cardinal preferences over a finite set of items. Specifically, each agent has a valuation function mapping items to real numbers, which can be arbitrary. These valuation functions should be interpreted as von Neumann-Morgenstern utility functions, i.e. they induce orderings on outcomes, and are standardly defined up to positive affine transformations (multiplication by a positive scalar and shift by a scalar). %Given this fact, %utility functions are usually represented in one of two canonical forms, \emph{unit-sum} (the valuations sum up to one)
%or \emph{unit-range} (all valuations are in the interval $[0,1]$ with both $0$ and $1$ in the image of the function). Throughout the paper, we will assume that the canonical representation of valuation functions is unit-range, since this representation is arguably more appropriate for social welfare maximization. Intuitively, agents should not be ``penalized'' for liking many items. We discuss how our results extend to the case when the canonical representation is assumed to be unit-sum in Section \ref{sec:future}. 

A (direct revelation) mechanism (without money) is a function $J$ mapping vectors of valuation functions (valuation profiles) to matchings, that is, allocations of items to agents such that each agent is allocated exactly one item. Mechanisms can also be randomized, and then the function $J$ is a random map. We will be interested in \emph{truthful} mechanisms, that is, mechanisms that do not give incentives to agents to misreport their valuation functions. For randomized mechanisms, we are interested in mechanisms that are \emph{truthful-in-expectation}, which means that no agent can increase its expected utility by misreporting. 

A very intuitive and well-studied truthful-in-expectation mechanism is \emph{random priority} (often also referred to as \emph{random serial dictatorship}), which first fixes a uniformly random ordering of the agents and then serially matches each agent to its most preferred item from the set of still unmatched items, based on that ordering. Random priority is an \emph{ordinal} mechanism, i.e., a mechanism that only depends on the ordering of items induced by the valuation functions and not the actual numerical values.

The goal we will look to maximize is the \emph{social welfare}, that is, the sum of agents' valuations for the items they are matched with in the outcome of the mechanism (and the expected social welfare for randomized mechanisms). We measure the performance of a mechanism by its approximation ratio, which is the worst ratio between the (expected) social welfare of the mechanism and the welfare of the optimal allocation, over all valuation profiles. It is easy to see that the mechanism that always outputs the optimal outcome is not truthful.

For a meaningful discussion on social welfare maximization for von Neumann-Morgenstern utilities, one has to fix a canonical representation of the valuation functions \cite{DG:10,GC:10, FM:13, BCHLPS:12}. These functions are usually represented in one of two canonical forms, \emph{unit-sum} \cite{BCHLPS:12,GC:10} (the valuations sum up to one)
or \emph{unit-range} \cite{ZHOU:90,BA:10,FM:13} (all valuations are in the interval $[0,1]$ with both $0$ and $1$ in the image of the function). Our main result is given by the following theorem, which holds for both normalizations.

\begin{theorem}\label{thm:main}
The approximation ratio of random priority is $\Theta(1/\sqrt{n})$. Furthermore, random priority is asymptotically the best truthful-in-expectation mechanism and the best ordinal (not necessarily truthful-in-expectation) mechanism for the problem.
\end{theorem}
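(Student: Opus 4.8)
The plan is to establish the theorem in three parts, each holding for both the unit-sum and unit-range normalizations: (i) a lower bound $\Omega(1/\sqrt n)$ on the approximation ratio of random priority; (ii) a matching upper bound $O(1/\sqrt n)$ witnessed by an explicit family of profiles; and (iii) an upper bound $O(1/\sqrt n)$ on the ratio of \emph{every} truthful-in-expectation mechanism and of \emph{every} ordinal mechanism. Part (i) shows random priority is as good as claimed, while parts (ii)--(iii) show it cannot be beaten by any mechanism in either class; together these give the $\Theta(1/\sqrt n)$ estimate and the optimality claim.

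For part (i), fix a profile with optimal welfare $W^{*}=\sum_i v_{i,o(i)}$, where $o(\cdot)$ is an optimal matching. The basic tool is a \emph{position lemma}: if agent $i$ is drawn into one of the first $k$ positions of the random ordering, then at most $k-1$ items have been removed before it picks, so it necessarily receives an item it values at least as highly as its $k$-th favorite. Writing $N_i(\tau)$ for the number of items agent $i$ values at least $\tau$, this gives $\Pr[\,i \text{ receives value} \ge \tau\,]\ge N_i(\tau)/n$, and integrating over $\tau$ while using the normalization yields the uniform baseline $\EE[\mathrm{RP}]\ge 1$. This already settles the low-welfare regime $W^{*}\le\sqrt n$, since there the ratio is at least $1/W^{*}\ge 1/\sqrt n$. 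The high-welfare regime is the crux: I would split agents into \emph{light} ones, all of whose values lie below $1/\sqrt n$ (their total optimal contribution is at most $\sqrt n$ and is absorbed by the baseline), and \emph{heavy} ones possessing an item of value at least $1/\sqrt n$. For heavy agents the plain position lemma is too pessimistic precisely when it is invoked, so I expect the main obstacle to be a \emph{contention-sensitive} estimate: bound from below the probability that agent $i$ secures its optimal item (or a comparable one) in terms of the number of competitors that rank that item above their own optimal item, and then aggregate these bounds via the normalization. The point is that an agent is denied a good item only when some competitor grabs it and thereby itself gains comparable value, and the normalization limits how often this can happen simultaneously.

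For part (ii) I would exhibit a layered family of profiles on which random priority provably loses a $\Theta(\sqrt n)$ factor. The design uses roughly $\sqrt n$ levels of contention, so that an optimal matching routes $\Theta(\sqrt n)$ worth of value to distinct agents, whereas random priority, committing greedily in a uniformly random order, dissipates value at each level and secures only $\Theta(1)$ in expectation; one then checks the ratio is $O(1/\sqrt n)$ and tunes the numerical values to realize the bound under both normalizations. The nontrivial part is arranging enough overlap among the contested items that the position-lemma bound is essentially tight for many agents at once, while still admitting an optimal assignment of high welfare.

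For part (iii) I would use Yao's principle, constructing a distribution over profiles on which the expected ratio of any mechanism in the relevant class is $O(1/\sqrt n)$, whence some profile attains it. For ordinal mechanisms the distribution hides the cardinal information: about $\sqrt n$ agents carry a spike of value on distinct items, randomized so that the induced rankings coincide with those of flat profiles; since an ordinal mechanism sees only rankings, it locates a spiked pair with probability $O(1/\sqrt n)$, capturing $O(1)$ against an optimum of $\Theta(\sqrt n)$. The truthful-in-expectation case is harder, since such a mechanism does see the cardinal values and the hiding trick fails; here I would instead lean on the incentive constraints, designing the distribution so that extracting more than an $O(1/\sqrt n)$ fraction would require rewarding agents for reporting extreme values on contested items, and then invoking truthfulness-in-expectation (no agent gains by exaggerating) to cap that reward. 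Turning this incentive constraint into a quantitative bound on welfare is the chief difficulty of the whole argument.
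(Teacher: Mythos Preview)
Your three-part decomposition matches the paper's, and the low-welfare case of part (i) (baseline $\EE[\mathrm{RP}]\ge 1$ from the position lemma, which handles $W^*\le\sqrt n$) is exactly what the paper does for unit-sum. The genuine gap is your high-welfare case. The ``contention-sensitive'' intuition---that an agent is denied a good item only when a competitor grabs it and thereby itself gains comparable value---is not correct as stated: the competitor takes agent $i$'s optimal item because it is the competitor's favorite \emph{remaining} item, which may have tiny value to the competitor while having value close to $1$ for agent $i$. So the value-conservation heuristic does not go through, and you do not yet have a bound. The paper sidesteps this entirely: for unit-range it first proves (Lemma~\ref{lem:zeroone}) that the worst case for any ordinal, anonymous, neutral mechanism is attained on \emph{quasi-combinatorial} profiles where every value is within $\epsilon$ of $0$ or $1$; on such a profile one tracks, round by round, the set $G_\ell$ of unmatched agents whose optimal item is still available and worth $\approx 1$, and shows $|G_\ell|$ drops by at most $2$ per round. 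Summing $|G_\ell|/(|G_\ell|+|B_\ell|)$ over the first $k/2$ rounds gives $\Omega(k^2/n)$ expected welfare against optimum $\approx k$, hence ratio $\Omega(k/n)\ge\Omega(1/\sqrt n)$ once $k\ge\sqrt n$. The unit-sum lower bound is then obtained by a reduction to the unit-range case. This combinatorial bookkeeping is the missing idea in your part (i).

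For part (iii) your Yao's-principle plan is workable for ordinal mechanisms but is more machinery than needed: the paper instead reduces (Lemma~\ref{lem:anonymous}) to anonymous mechanisms and exhibits a \emph{single} profile in which all agents share the same ranking, so any anonymous ordinal mechanism must output the uniformly random matching; the arithmetic then gives $O(1/\sqrt n)$ directly. For truthful-in-expectation mechanisms your sketch is too vague to constitute an argument, and Yao is awkward here because truthfulness is a per-profile constraint, not a distributional one. The paper again uses a single profile: roughly $k\approx\sqrt n$ agents each have a ``private'' top item, and every such agent has an alternative report identical to the $n-k$ ``background'' agents. By anonymity the alternative report earns at most $O(1/n)$ in expectation; truthfulness then caps the probability that agent $i$ receives its private item at $O(k/n)$, so the total welfare is $O(1)+k\cdot O(k/n)=O(1)$ against optimum $\Theta(k)$. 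Turning the incentive constraint into a probability bound via a concrete deviation, rather than a distributional argument, is the step your proposal is missing.
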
 

\noindent The theorem also holds for an extension to the unit-range representation, when $0$ is not required to be in the image of the function; we discuss how in Section \ref{sec:future}.

\subsection{Discussion and related work}

In the presence of incentives, the one-sided matching problem (often referred to as \emph{the assignment problem} or \emph{house allocation problem}) was originally defined in \cite{HZ:79} and has been studied extensively ever since \cite{ZHOU:90, BM:01, GC:10, SVE:99, DG:10}.  There are several surveys discussing the problem (as well as more general matching problems) \cite{AS:13,SU:11} and we refer the interested reader to those for a more detailed exposition.
Random priority is a folklore mechanism that solves the problem fairly (in the sense of anonymity) and satisfies some additional nice properties; it is truthful-in-expectation and \emph{ex-post Pareto efficient}. On the other hand, it is not \emph{ex-ante Pareto efficient}, i.e., there exists some matching (different than the outcome of random priority) for which all agents are at least as satisfied and one agent is strictly better off \emph{in expectation}. Most of the previous work in literature \cite{ZHOU:90,BM:01,HZ:79} has mainly been directed towards designing mechanisms with desired properties that achieve efficiency criteria for different Pareto efficiency notions and less towards whether truthful mechanisms achieve ``good levels'' of social welfare. This is a really important question to ask, especially since the notion of approximation ratio gives us a systematic way of comparing mechanisms or proving their limitations. The term \emph{approximate mechanism design without money} was used in \cite{PT:09} to describe problems where the goal is to approximately optimize some objective function, given the constraint of truthfulness. This approach has been adopted by a large body of computer science literature \cite{GC:10,FM:13,BCK:11,DG:10,PP:10,AN:10} and the approximation ratio is now considered to be the predominant measure of efficiency for truthful mechanisms.

Social welfare maximization is arguably a less natural objective when agents are endowed with von Neumann-Morgenstern utilities, because of adding up valuations after normalization. On the other hand, it is quite widespread in \emph{quasi-linear} settings. We strongly believe that considering the social welfare objective for von Neumann-Morgenstern utilities is just as natural and in fact there is a growing amount of literature that embraces the same idea and provides arguments to support it \cite{GC:10, FM:13, BCHLPS:12}. Finally, it is not difficult to see that no nontrivial approximation guarantees can be achieved by truthful-in-expectation mechanisms without any normalization \cite{GC:10,DG:10,BCHLPS:12}. %In fact, one can show that without normalization, no truthful-in-expectation mechanism can achieve an approximation ratio better than $O(1/n)$ and random priority does indeed achieve that bound.

A different approach, often encountered in literature, is to consider ordinal measures of efficiency. For example, Bhalgat et al \cite{BCK:11} calculate the approximation ratio of random priority when the objective is the maximization of \emph{ordinal} social welfare, a notion of welfare that they define based solely on ordinal information. Ordinal measures of efficiency have also been studied in terms of incentives and approximation ratios \cite{PP:10}. However, these measures do not encapsulate the ``socially desired'' outcome in the way that social welfare does, i.e., they do not necessarily maximize the aggregate happiness of individuals \cite{AN:10}. This is even more evident if one considers that the the standard assumption in social choice and economics theory is that such an underlying cardinal structure exists, even if agents are not asked to report it. In our setting, not reporting the full cardinal information corresponds to using ordinal mechanisms, which by our main theorem, is enough for achieving the (asymptotically) best approximation guarantees. On the other hand, cardinal reports are also often encountered in literature, with the pseudo-market mechanism of \cite{HZ:79} being a prominent example. Several cardinal mechanisms were also presented in \cite{GC:10,FM:13} for social welfare maximization and \cite{FT:10} for information elicitation. Our main theorem does not only prove the capabilities of random-priority but also the limitations of all truthful (including cardinal) mechanisms. In Section \ref{sec:future}, we show that for the special case of three agents and three items, a truthful-in-expectation mechanism strictly outpeforms all ordinal.
%As we will highlight in Section \ref{sec:future}, non-ordinal mechanisms do provide better approximation guarantees for concrete sizes of the set of agents.  

\section{Preliminaries} \label{sec:prelim}

Let $N=\{1,\ldots,n\}$ be a finite set of agents and $M=\{1,\ldots,n\}$ be a finite set of indivisible items. An \emph{outcome} is a matching of agents to items, that is, an assignment of items to agents where each agent gets assigned exactly one item. We can view an outcome $\mu$ as a vector $(\mu_1,\mu_2\ldots, \mu_n)$ where $\mu_i$ is the unique item matched with agent $i$. Let $O$ be the set of all outcomes. Each agent $i$ has a private valuation function mapping outcomes to real numbers that can be arbitrary except for two conditions; agents are indifferent between outcomes that match them to the same item and they are not indifferent between outcomes that match them to different items. The first condition implies that agents only need to specify their valuations for items instead of outcomes and hence the valuation function of an agent $i$ can be instead defined as a map $u_i:M \rightarrow \RR$ from items to real numbers. The second condition requires that valuation functions are injective, i.e., they induce a total ordering on the items. This is mainly for convenience, to avoid having to specify tie-breaking rules. As we discuss in Section \ref{sec:future}, all of our results extend to most natural tie-breaking rules. Valuation functions are standardly considered to be well-defined up to positive affine transformations, that is, for item $j: j\rightarrow\alpha u_i(j)+\beta$ is considered to be a different representation of $u_i$. The two standard ways to fix the canonical representation of $u_i$ in literature are \emph{unit-range}, i.e., $\max_j u_i(j)=1$ and $\min_j u_i(j)=0$ and \emph{unit-sum}, that is $\sum_j u_i(j)=1$.

Let $V$ be the set of all canonically represented valuation functions of an agent. Call $\mathbf{u}=(u_1,u_2,\ldots, u_n)$ a \emph{valuation profile} and let $V^n$ be the set of all valuation profiles with $n$ agents. A \emph{direct revelation mechanism} (without money) is a function $J: V^n\rightarrow  O$ mapping \emph{reported} valuation profiles to matchings. For a randomized mechanism, we define $J$ to be random map $J:V^n\rightarrow O$. Let $J(\mathbf{u})_i$ denote the restriction of the outcome of the mechanism to the $i$'th coordinate, which is the item assigned to agent $i$ by the mechanism.

We will be interested in \emph{truthful mechanisms}, that is, mechanisms that do not incentivize agents to report anything other than their true valuation functions. Formally, a mechanism $J$ is truthful if for each agent $i$ and all $\mathbf{u}=(u_i,u_{-i}) \in V^n$ and $\tilde{u}_i \in V$ it holds that
$u_i(J(u_i,u_{-i})_i) \geq u_i(J(\tilde{u}_i,u_{-i})_i)$,
where $u_{-i}$ denotes the valuation profile $\mathbf{u}$ without the $i$'th coordinate. In other words, if $u_i$ is agent $i$'s true valuation function, then it has no incentive to misreport. For randomized mechanisms, we say that a mechanism is \emph{truthful-in-expectation} if for each agent $i$ and all $\mathbf{u}=(u_i,u_{-i}) \in V^n$ and $\tilde{u}_i \in V$ it holds that
$\EE[u_i(J(u_i,u_{-i})_i)] \geq \EE[u_i(J(\tilde{u}_i,u_{-i})_i)]$.

A class of mechanisms that turns out to be important for our purposes is that of neutral and anonymous mechanisms. Formally, a mechanism is \emph{anonymous} if for any valuation profile $(u_1,u_2,\ldots,u_n)$, every agent $i$ and any permutation $\pi:N\rightarrow N$ it holds that $J(u_1,u_2,\ldots,u_n)_i = J(u_{\pi(1)},u_{\pi(2)},\ldots,u_{\pi(n)})_{\pi(i)}$. By this definition, in an anonymous mechanism, agents with exactly the same valuation functions must have the same probabilities of receiving each item. Similarly, a mechanism is \emph{neutral} if for any valuation profile $(u_1,u_2,\ldots,u_n)$, every item $j$ and any permutation $\sigma:M\rightarrow M$ it holds that $J(u_1,u_2,\ldots,u_n)_i = \sigma^{-1}(J(u_1  \circ \sigma, u_2 \circ \sigma,\ldots, u_n \circ \sigma)_i)$, i.e., the mechanism is invariant to the indices of the items.

We will consider both ordinal and cardinal mechanisms. A mechanism $J$ is \emph{ordinal} if for any $i$, any valuation profile $\mathbf{u}=(u_i,u_{-i})$ and any valuation function $u_i'$ such that for all $j,j' \in M$, $u_i(j)<u_i(j') \Leftrightarrow u_i'(j)<u_i'(j')$, it holds that $J(u_i,u_{-i})=J(u_i',u_{-i})$. A mechanism for which the above does not necessarily hold is \emph{cardinal}. Informally, ordinal mechanisms operate solely based on the \emph{ordering} of items induced by the valuation functions and not the actual numerical values themselves, while cardinal mechanisms take those numerical values into account when outputting an outcome. 

We measure the performance of a mechanism by its approximation ratio,
\begin{equation*}
ar(J) = \inf_{\mathbf{u} \in V^n}\frac{\sum_{i=1}^{n}u_i(J(\mathbf{u})_i)}{\max_{\mu \in O}\sum_{i=1}^{n}u_i(\mu_i)}
\end{equation*}

\noindent The quantity $\sum_{i=1}^{n}u_i(J(\mathbf{u})_i)$ is called the \emph{social welfare} of mechanism $J$ on the valuation profile $\mathbf{u}$ and $\max_{\mu \in O}\sum_{i=1}^{n}u_i(\mu_i)$ is the social welfare of the optimal matching. For ease of notation, let $w^{*}(\mathbf{u})=\max_{\mu \in O}\sum_{i=1}^{n}u_i(\mu_i)$. For the case of randomized mechanisms, we will be interested in the \emph{expected social welfare} $\EE\left[\sum_{i=1}^{n}u_i(J(\mathbf{u})_i)\right]$ of mechanism $J$ and the approximation ratio is defined accordingly. 

Next we will state a lemma that will be useful for our proofs. These kinds of lemmas are standard in literature (e.g. see \cite{GC:10, FM:13}). The lemma implies that when trying to prove upper bounds on the approximation ratio of mechanisms, it suffices to consider mechanisms that are anonymous. 

\begin{lemma}\label{lem:anonymous}
For any mechanism $J$, there exists an anonymous mechanism $J'$ such that $ar(J')\geq ar(J)$. Furthermore, if $J$ is truthful (for deterministic mechanisms) or truthful-in-expectation (for randomized mechanisms) then it holds that $J'$ is truthful-in-expectation.
\end{lemma}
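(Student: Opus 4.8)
The plan is to obtain $J'$ from $J$ by \emph{symmetrizing over agent relabelings}: sample a permutation $\pi:N\to N$ uniformly at random, run $J$ on the relabeled profile $\pi\mathbf{u}=(u_{\pi(1)},\dots,u_{\pi(n)})$, and then undo the relabeling by assigning to agent $i$ the item $J(\pi\mathbf{u})_{\pi^{-1}(i)}$ that $J$ hands to the coordinate carrying $i$'s report. For each fixed $\pi$ this is a valid matching (a composition of bijections), so $J'$ is a well-defined randomized mechanism, and writing $J^\pi$ for the map $\mathbf{u}\mapsto(J(\pi\mathbf{u})_{\pi^{-1}(i)})_i$ we have that $J'$ is exactly the uniform mixture over $\pi$ of the copies $J^\pi$.

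For the approximation ratio I would first record that relabeling agents leaves the optimum unchanged, $w^{*}(\pi\mathbf{u})=w^{*}(\mathbf{u})$, and that the welfare of $J^\pi$ on $\mathbf{u}$ equals the welfare of $J$ on $\pi\mathbf{u}$: reindexing the sum $\sum_i u_i\big(J(\pi\mathbf{u})_{\pi^{-1}(i)}\big)$ by $k=\pi^{-1}(i)$ turns it into $\sum_k u_{\pi(k)}\big(J(\pi\mathbf{u})_k\big)$, which is precisely the social welfare of $J$ on $\pi\mathbf{u}$ (this holds verbatim with expectations if $J$ is randomized). Hence for every profile the ratio achieved by $J'$ is the average over $\pi$ of $\sum_k u_{\pi(k)}(J(\pi\mathbf{u})_k)/w^{*}(\pi\mathbf{u})$, each term of which is at least $ar(J)$; averaging preserves the bound, so $ar(J')\ge ar(J)$. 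Anonymity of $J'$ then follows from the fact that the mixture is taken uniformly over the whole symmetric group: relabeling the input by any fixed $\tau$ merely reindexes the summation over $\pi$, leaving the induced distribution on items invariant in exactly the manner required by the form of the anonymity identity $J'(\mathbf{u})_i=J'(\pi\mathbf{u})_{\pi(i)}$ given in the definition (read as an equality of item-distributions).

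For the \emph{furthermore} part I would argue that truthfulness passes through the averaging. The key observation is that each single copy $J^\pi$ is itself truthful-in-expectation: in $J^\pi$ agent $i$'s report is placed at coordinate $\pi^{-1}(i)$ of the profile handed to $J$, and $i$ receives exactly the item $J$ allocates to that coordinate, while a deviation by $i$ changes only that coordinate and leaves the others (determined by $u_{-i}$) fixed; so the truthfulness (resp.\ truthfulness-in-expectation) of $J$ at coordinate $\pi^{-1}(i)$, evaluated against $i$'s true valuation $u_i$, gives $\EE[u_i(J^\pi(u_i,u_{-i})_i)]\ge\EE[u_i(J^\pi(\tilde u_i,u_{-i})_i)]$. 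Since each incentive constraint is an inequality that is linear in the output distribution, it is preserved under the uniform convex combination defining $J'$: summing the per-copy inequalities over $\pi$ and dividing by $n!$ yields truthfulness-in-expectation of $J'$. Note that this delivers truthfulness only \emph{in expectation}, consistent with the statement, since even a deterministically truthful $J$ is turned into a genuinely randomized $J'$.

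The main obstacle I expect is bookkeeping rather than conceptual: one must fix a single convention for how permutations act on profiles and on outcomes and then verify, without inverse slips, both that $J^\pi$ is truthful for $i$ (matching $i$'s true type $u_i$ to the coordinate $\pi^{-1}(i)$ that $J$'s constraint refers to) and that the uniform mixture satisfies the precise form of the anonymity identity. Everything else is a reindexing of finite sums.
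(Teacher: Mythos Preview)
Your proposal is correct and follows essentially the same approach as the paper: define $J'$ by applying a uniformly random permutation of agents before running $J$ (and undoing the relabeling on the output). The paper's own proof is much terser---it asserts anonymity, the approximation-ratio bound, and truthfulness-in-expectation each in a single sentence---whereas you spell out the reindexing $k=\pi^{-1}(i)$ for the welfare identity and argue truthfulness per-copy before averaging; this extra care is exactly the bookkeeping the paper leaves implicit.
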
 

\begin{proof}
Let $J'$ be the mechanism that given any valuation profile $\mathbf{u}$ applies a uniformly random permutation to the set of agents and then applies $J$ on $\mathbf{u}$. The mechanism is clearly anonymous. Furthermore, since $\mathbf{u}$ is a valid input to $J$, the approximation ratio of $J'$ can not be worse than that of $J$, since the approximation ratio is calculated over all possible valuation profiles. For the same reason, if $J$ is truthful (or truthful-in-expectation) and since the permutation is independent of the reports, $J'$ is truthful-in-expectation. \hfill $\square$ 
\end{proof}

We will particularly be interested in the mechanism \emph{random priority}. Random priority fixes an ordering of the agents uniformly at random and then lets them pick their most preferred items from the set of available items based on this ordering. Note that random priority is truthful-in-expectation, ordinal, anonymous and neutral. We conclude the section with the following lemma. Similar lemmas have been proved in literature (e.g. see Lemma 1 in \cite{BM:01}, for a slightly more general statement).

\begin{lemma}\label{lem:RPopt}
For any valuation profile $\mathbf{u}$, the optimal allocation on $\mathbf{u}$ is a possible outcome of random priority.    
\end{lemma}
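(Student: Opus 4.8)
The plan is to prove the statement by induction on the number of agents $n$, using a cyclic exchange argument to identify an agent that can safely be placed first in the random-priority ordering. The base case $n=1$ is immediate, since the unique agent is matched to the unique item by both the optimum and random priority. For the inductive step, let $\mu^*$ be an optimal matching on $\mathbf{u}$.

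The first and most delicate step is to establish that there is at least one agent $i$ whose most preferred item over all of $M$ is exactly $\mu^*_i$. I would argue by contradiction: assuming no such agent exists, build a directed graph on $N$ in which each agent points to the unique agent holding its favorite item under $\mu^*$. Every vertex has out-degree one, so the graph contains a directed cycle $i_1 \to i_2 \to \cdots \to i_k \to i_1$, and the cycle has length $k \geq 2$ since a self-loop would directly contradict the assumption. Rotating the assignment along this cycle---giving each $i_\ell$ the item currently held by its successor, which is precisely $i_\ell$'s favorite item---leaves all agents outside the cycle unchanged and makes every agent on the cycle strictly better off, because none of them held its favorite item beforehand. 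This strictly increases the social welfare, contradicting the optimality of $\mu^*$. Hence such an agent $i$ exists.

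Next I would place agent $i$ first in the priority ordering. Since $\mu^*_i$ is $i$'s favorite item among all of $M$, and a fortiori among the items still available when $i$ moves first, random priority assigns $\mu^*_i$ to $i$, matching the optimum on this coordinate. It then remains to handle the residual instance with agent set $N \setminus \{i\}$ and item set $M \setminus \{\mu^*_i\}$. Here I would observe that the restriction of $\mu^*$ to these sets is itself optimal for the residual instance: any strictly better matching there, combined with the fixed pair $(i,\mu^*_i)$, would yield a matching on the original instance with welfare strictly exceeding that of $\mu^*$. By the induction hypothesis there is an ordering of $N \setminus \{i\}$ under which random priority reproduces this residual matching on $M \setminus \{\mu^*_i\}$; prepending $i$ to that ordering gives an ordering of all of $N$ under which random priority outputs $\mu^*$. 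As every ordering occurs with positive probability, $\mu^*$ is a possible outcome of random priority.

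The main obstacle is the first step---guaranteeing an agent matched to its globally favorite item---since this is exactly where optimality (rather than mere Pareto efficiency) must be converted into a contradiction through the cyclic exchange. Once that structural fact is secured, the ordering is constructed greedily and the remainder of the induction is routine.
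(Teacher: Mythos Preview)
Your proof is correct and follows essentially the same approach as the paper: a cyclic-exchange argument to find an agent matched to its favorite item in the optimum, followed by induction on the residual instance. Your version is in fact more careful than the paper's, since you explicitly verify that the restricted matching is optimal for the residual instance, a step the paper asserts without justification.
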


\begin{proof}
First, suppose that no agent is matched with its most preferred item in the optimal allocation. Then there must exist agents $i_1,...,i_k$ such that for each $l$, agent $i_{l+1}$ is matched with agent $i_{l}$'s most preferred item and agent $i_1$ is matched with agent $i_k$'s most preferred item. By swapping items along this cycle, all agents are better off and the allocation is not optimal. 

Now consider any valuation profile $\mathbf{u}$. Since there exists an agent $j$ that is matched with its most preferred item $j$ in the optimal allocation for $\mathbf{u}$, random priority could pick this agent first. If we reduce $\mathbf{u}$ by removing the agent $i$ and item $j$, we obtain a smaller valuation profile $\mathbf{u'}$ where the optimal allocation is the same as in $\mathbf{u}$ but without agent $i$ and item $j$. Then by inductively applying the same argument, the lemma follows. \hfill $\square$ 
\end{proof}

\section{Unit-range valuation functions} \label{sec:unitrange}

In this section, we assume that the representation of the valuation functions is unit-range. It will be useful to consider a  special class of valuation functions $C_\epsilon$ that we will refer to as \emph{quasi-combinatorial valuation functions}, a straightforward adaptation of a similar notion in \cite{FM:13}. Informally, a valuation function is quasi-combinatorial if the valuations of each agent for every item are close to $1$ or close to $0$ (the proximity depends on $\epsilon$). Formally, 
\begin{equation*}
C_\epsilon =\left\{ u \in V| u(M) \subset [0,\epsilon) \cup (1-\epsilon,1]\right\},
\end{equation*}

\noindent where $u(M)$ is the image of the valuation function $u$. Let $C_\epsilon^n \subseteq V^n$ be the set of all valuation profiles with $n$ agents whose valuation functions are in $C_\epsilon$. The following lemma implies that when we are trying to prove a lower bound on the approximation ratio of random priority, it suffices to restrict our attention to quasi-combinatorial valuation profiles $C_\epsilon^n \subseteq V^n$ for any value of $\epsilon$. 

\begin{lemma}\label{lem:zeroone}
Let $J$ be an ordinal, anonymous and neutral randomized mechanism for unit-range representation, and let $\epsilon>0$. Then 
\[  ar(J) = \inf_{{\bf u} \in C_\epsilon^n} \frac{\EE[\sum_{i=1}^{n}u_i(J(\mathbf{u})_i)]}{w^{*}(\mathbf{u})}. \]
\end{lemma}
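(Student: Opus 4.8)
The plan is to prove the two inequalities separately. Since $C_\epsilon^n \subseteq V^n$, the infimum on the right is taken over a subset of the profiles that define $ar(J)$, so trivially $\inf_{\mathbf{u} \in C_\epsilon^n} \frac{\EE[\sum_i u_i(J(\mathbf{u})_i)]}{w^*(\mathbf{u})} \ge ar(J)$. All the work is in the reverse inequality: for every profile $\mathbf{u} \in V^n$ and every $\eta > 0$ I would exhibit a quasi-combinatorial profile $\mathbf{w} \in C_\epsilon^n$ whose ratio is at most $\text{ratio}(\mathbf{u}) + \eta$, which yields $\inf_{C_\epsilon^n}(\cdots) \le ar(J)$ and hence equality.

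Fix $\mathbf{u}$ and let $P$ be the closed polytope of all valuation profiles sharing the ordinal type of $\mathbf{u}$: for each agent $i$ with ranking $j_1 \succ \cdots \succ j_n$, $P$ imposes $1 = u_i'(j_1) \ge u_i'(j_2) \ge \cdots \ge u_i'(j_n) = 0$, the top and bottom values being pinned by unit-range. The crucial observation is that because $J$ is \emph{ordinal}, its output distribution over matchings is constant on $P$, so the numerator $\mathrm{Num}(\mathbf{u}') := \EE[\sum_i u_i'(J(\mathbf{u}')_i)] = \sum_\mu p_\mu \sum_i u_i'(\mu_i)$ is a \emph{linear} function of $\mathbf{u}'$ (the $p_\mu$ being the fixed matching probabilities for that type), extending continuously to the boundary of $P$ even though $J$ itself is undefined on non-injective profiles. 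The denominator $w^*(\mathbf{u}') = \max_\mu \sum_i u_i'(\mu_i)$ is convex, and under unit-range satisfies $w^*(\mathbf{u}') \ge 1$ everywhere (match some agent to its top item), so $\text{ratio} = \mathrm{Num}/w^*$ is continuous on the compact set $P$.

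The key step is to show $\min_{\mathbf{u}' \in P}\text{ratio}(\mathbf{u}')$ is attained at a vertex of $P$. For any fixed $r \ge 0$, the function $\mathrm{Num}(\mathbf{u}') - r\,w^*(\mathbf{u}')$ is concave, being linear minus a nonnegative multiple of a convex function, so it attains its minimum over $P$ at a vertex. Since $w^* > 0$, the statement $\text{ratio} \ge r$ on all of $P$ is equivalent to $\mathrm{Num} - r\,w^* \ge 0$ on all of $P$, which by concavity is equivalent to the same holding at every vertex; taking the supremum over such $r$ gives $\min_P \text{ratio} = \min_{\mathbf{v}\ \mathrm{vertex}} \text{ratio}(\mathbf{v})$. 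The vertices of $P$ are exactly the threshold (zero-one) profiles, in which each agent assigns $1$ to its top $k_i$ items and $0$ to the rest, with $1 \le k_i \le n-1$. In particular, since $\mathbf{u} \in P$ there is a vertex $\mathbf{v}^*$ with $\text{ratio}(\mathbf{v}^*) \le \text{ratio}(\mathbf{u})$.

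Finally I would perturb $\mathbf{v}^*$ into $C_\epsilon^n$: within each agent's block of $1$'s replace the values by distinct numbers in $(1-\delta,1]$ keeping the top item at $1$ and respecting $\mathbf{u}$'s ranking, and similarly spread the $0$'s over $[0,\delta)$ keeping the bottom item at $0$. For $\delta < \epsilon$ the result $\mathbf{w}$ is injective, unit-range, lies in $C_\epsilon^n$, and has the same ordinal type as $\mathbf{u}$, so $\mathrm{Num}$ and $w^*$, hence $\text{ratio}$, are continuous along $\mathbf{w} \to \mathbf{v}^*$ as $\delta \to 0$; choosing $\delta$ small enough gives $\text{ratio}(\mathbf{w}) \le \text{ratio}(\mathbf{v}^*) + \eta \le \text{ratio}(\mathbf{u}) + \eta$. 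The main obstacle is the nonlinearity of the denominator $w^*$: the concavity trick (replacing "the ratio attains its minimum at a vertex" by the concavity of $\mathrm{Num}-r\,w^*$) is what lets the vertex argument go through despite $w^*$ being only piecewise linear, and it is essential that $w^* \ge 1 > 0$ so the ratio never blows up. I expect anonymity and neutrality not to be needed for this reduction (only ordinality and unit-range are used); they are part of the standing hypotheses because the lemma is ultimately applied to random priority.
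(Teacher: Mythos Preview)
Your argument is correct and takes a genuinely different route from the paper's. The paper first uses anonymity and neutrality to relabel so that the optimal matching is the identity, replacing the denominator by the single linear form $\sum_i u_i(i)$; it then performs an induction on the number of ``middle'' values, in each step sliding one agent's entire block of values in $[\epsilon,1-\epsilon]$ by a real parameter $x$ and observing that the ratio is fractional linear in $x$ (linear numerator over linear denominator), hence monotone on the interval, so pushing $x$ to an endpoint does not increase the ratio and strictly decreases the count. Your approach instead keeps the true piecewise-linear denominator $w^*$ and works globally on the closed ordinal-type polytope, using the concavity of $\mathrm{Num}-r\,w^*$ to send the minimum of the ratio to a vertex in one stroke, then identifying the vertices as threshold profiles and perturbing back into $C_\epsilon^n$. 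This buys you a small amount of extra generality: as you note, anonymity and neutrality play no role in your reduction---only ordinality (to freeze the distribution $\{p_\mu\}$ on $P$) and unit-range (to guarantee $w^*\ge 1>0$) are used. The paper's argument is more hands-on and avoids any appeal to polytope geometry; yours is cleaner and conceptually separates the linear-over-convex structure from the bookkeeping.
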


\begin{proof}
Since $J$ is anonymous and neutral, we can assume that the optimal matching is $\mu^{*}$ where $\mu^{*}$ is the matching with $\mu^{*}_i=i$ for every agent $i \in N$. Given this, then for any valuation profile $\mathbf{u}$, define 
\[ g(\mathbf{u}) = \frac{\EE[\sum_{i=1}^{n}u_i(J(\mathbf{u})_i)]}{\sum_{i=1}^{n}u_i(\mu^{*}_i)}.\] 
\noindent Because of this, the approximation ratio can be written as $ar(J) = \inf_{\mathbf{u} \in V^n} g(\mathbf{u})$. Now since $C_\epsilon^n \subseteq V^n$, the lemma follows from the following claim:
\begin{equation*}
\text{For all } \mathbf{u} \in V^n \text{ there exists } \mathbf{u'} \in C_\epsilon^n \text{ such that } g(\mathbf{u'}) \leq g(\mathbf{u})
\end{equation*}
We will prove the claim by induction in $\sum_{i=1}^{n} \#\{ u_i(M) \cap [\epsilon,1-\epsilon] \}$.

\ \newline 
\emph{Induction basis:} Since $\sum_{i=1}^{n} \#\{ u_i(M) \cap [\epsilon,1-\epsilon] \} = 0$, one can clearly see that $u_i \in C_\epsilon$ for all $i \in N$. So, for this case, let $\mathbf{u'}=\mathbf{u}$.

\ \newline 
\emph{Induction step:} Consider a profile $\mathbf{u} \in V^n$ with $\sum_{i=1}^{n} \#\{ u_i(M) \cap [\epsilon,1-\epsilon]\} > 0$. Clearly, there exists an $i$ such that $\#\{ u_i(M) \cap [\epsilon,1-\epsilon]\} > 0$. By this fact, there exist $l,r \in [\epsilon,1-\epsilon]$, such that $l \leq r$, $u_i(M) \subset [0,\epsilon) \cup [l,r] \cup (1-\epsilon,1]$ and $\{l,r\} \subseteq u_i(M)$.

Let $\bar{l}$ be the largest number such that $\bar{l} \in [0,\epsilon)$ and $\bar{l} \in u_i(M)$. Similarly, let $\bar{r}$ be the smallest number such that $\bar{r} \in (1-\epsilon,1]$ and $\bar{r} \in u_i(M)$. Note that both those numbers exist, since $\{0,1\} \subseteq u_i(M)$. Now let $\tilde{l} = \frac{\bar{l} + \epsilon}{2}$, and $\tilde{r} =\frac{\bar{r} + 1 - \epsilon}{2}$

Now, for any $x \in [\tilde{l}-l,\tilde{r}-r]$, define a valuation function $u_i^x \in V$ as follows:
\begin{equation*}
u_i^x(j) = \begin{cases} 
u_i(j), &\mbox{for } j \notin u^{-1}_i\left([\epsilon,1 - \epsilon]\}\right) \\ 
u_i(j)+x, & \mbox{for } j \in u^{-1}_i\left([\epsilon,1 - \epsilon]\}\right). \end{cases} 
\end{equation*} 

This is still a valid valuation function, since by the choice of the interval $[\tilde{l}-l,\tilde{r}-r]$, there are no ties in the image of the function. Let $(u_i^x,\mathbf{u}_{-i})$ be the valuation profile where all agents have the same valuation functions as in $\mathbf{u}$ except for agent $i$, who has valuation function $u_i^x$. Define the following function $f:x \rightarrow g\left(\left(u_i^x,\mathbf{u}_{-i}\right)\right)$. Since $J$ is ordinal, by the definition of function $g$, we can see that $f$ on the domain $[\tilde{l}-l,\tilde{r}-r]$ is a fractional linear function $x\rightarrow (ax+b)/(cx+d)$ for some $a,b,c,d, \in \RR$. Since $f$ is defined on the whole interval $[\tilde{l}-l,\tilde{r}-r]$, it is either monotonically increasing, monotonically decreasing or constant in the interval. If $f$ is monotonically increasing, let $\tilde{\mathbf{u}} = (u^{\tilde{l}-l},\mathbf{u}_{-i})$, otherwise let $\tilde{\mathbf{u}} = (u^{\tilde{r}-r},\mathbf{u}_{-i})$. Clearly, $g(\tilde{\mathbf{u}}) \leq g(\mathbf{u})$ and 
\[\sum_{i=1}^{n} \#\{ \tilde{u}_i(M) \cap [\epsilon,1-\epsilon] \} < \sum_{i=1}^{n} \#\{ u_i(M) \cap [\epsilon,1-\epsilon] \}.\] Then, apply the induction hypothesis on $\mathbf{\tilde{u}}$. This completes the proof. \hfill $\square$
\end{proof}

The lemma formalizes the intuition that because the mechanism is ordinal, the worst-case approximation ratio is encountered on extreme valuation profiles. 

For the unit-range representation, Theorem \ref{thm:main} is given by the following lemmas.

\begin{lemma} For unit-range representation, $ar(RP)= \Omega\left(n^{-1/2}\right)$.\label{lem:RPlower}
\end{lemma}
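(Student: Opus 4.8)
The plan is to reduce to a clean combinatorial worst case and then lower-bound the expected number of agents that receive a highly-valued item. First I would invoke Lemma~\ref{lem:zeroone}: since $RP$ is ordinal, anonymous and neutral, for any fixed $\epsilon>0$ it suffices to bound $\EE[\sum_i u_i(RP(\mathbf{u})_i)]/w^{*}(\mathbf{u})$ over quasi-combinatorial profiles $\mathbf{u}\in C_\epsilon^n$, and I would then let $\epsilon\to 0$ so that every value is either (essentially) $1$ (``liked'') or $0$ (``disliked''), each agent having at least one liked and one disliked item. Using anonymity, neutrality and Lemma~\ref{lem:RPopt}, I would normalize so that the optimal matching is the identity; writing $m=w^{*}(\mathbf{u})$, the optimum matches exactly $m$ agents to liked items, so up to $O(\epsilon)$ slack $w^{*}(\mathbf{u})\approx m$, the welfare of $RP$ equals the expected number of agents holding a liked item, each agent $i$ has a liked set $S_i$, and the reduction yields a maximum ``liked'' matching of size $m$.

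Next I would write the expected welfare as $\sum_{i=1}^n p_i$, where $p_i$ is the probability that agent $i$ receives a liked item, and bound each $p_i$. The one robust, dynamics-free estimate is that if agent $i$ occupies one of the first $|S_i|$ positions of the random order then strictly fewer than $|S_i|$ items have been removed, so some item of $S_i$ survives; hence $p_i\ge |S_i|/n$ and $\EE[RP]\ge \tfrac1n\sum_i|S_i|$. This already settles the \emph{dense} regime: if $\sum_i|S_i|\ge n^{3/2}$ then $\EE[RP]\ge\sqrt{n}\ge m/\sqrt{n}$, so the ratio is $\ge 1/\sqrt n$. Combined with the trivial bound $\EE[RP]\ge 1-\epsilon$ coming from the first agent (which in unit-range always grabs a value-$1$ item), it also settles the regime $m\le\sqrt{n}$, where the ratio is $\ge (1-\epsilon)/m=\Omega(1/\sqrt n)$.

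The remaining, and genuinely hard, regime is small average liked-set size together with a large optimal matching, where neither naive bound reaches $m/\sqrt n$. Here I would pass to a finer, item-based estimate: for a matched agent $i$ (liking its own item $i$), consider the event that $i$ precedes every other agent that likes item $i$, which occurs with probability $1/d_i$, where $d_i$ is the popularity of item $i$; being earliest among the likers of item $i$ should let $i$ secure a liked item, giving $\EE[RP]\ge\sum_{i\le m}1/d_i\ge m^2/\sum_{i\le m}d_i$ by the Cauchy--Schwarz (arithmetic--harmonic mean) inequality. When $m=n$ this already suffices, since $\sum_{i\le n}d_i=\sum_i|S_i|<n^{3/2}$ forces $\EE[RP]>\sqrt n$; the plan for intermediate $m$ is to control the total popularity $\sum_{i\le m}d_i$ using the small liked sets together with Hall's condition for the size-$m$ matching.

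The step I expect to be the main obstacle is precisely this last estimate, for two reasons. First, the ``earliest liker'' event does not by itself guarantee a liked item: because $RP$ is greedy on each agent's own preference, item $i$ can be taken as a \emph{last resort} by an agent that does not like it before any of its likers is reached, so the clean figure $1/d_i$ must be defended against these dynamics, e.g.\ by arguing that such last-resort grabs are rare among the early positions, or via a coupling/monotonicity argument on the set of available items. Second, converting $\sum 1/d_i$ into $\Omega(m/\sqrt n)$ needs a genuine interplay between item popularities and the matching: the convexity step alone is too lossy for $m<n$, so the heart of the proof is a combinatorial argument bounding how much demand the $m$ matched items can carry while a perfect (size-$m$) liked matching still exists.
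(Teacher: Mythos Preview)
Your reduction to quasi-combinatorial profiles via Lemma~\ref{lem:zeroone}, the normalization of the optimum to the identity, and the handling of the regime $m\le\sqrt{n}$ via the trivial bound $\EE[RP]\ge 1-\epsilon$ all match the paper. From that point on, however, your plan diverges from the paper and, as you yourself flag, the divergent part is not complete.

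The paper does \emph{not} look at the full liked sets $S_i$ or at item popularities $d_i$, and it never splits into a dense/sparse dichotomy on $\sum_i|S_i|$. Instead it runs a single round-by-round dynamics argument that only tracks, for each unmatched agent $j$, whether \emph{its own optimal item} $j$ is still available. Writing $G_\ell$ for the set of not-yet-picked agents in $\{1,\dots,m\}$ whose optimal item is still free and $B_\ell$ for the remaining not-yet-picked agents, the expected contribution of round $\ell$ is at least $|G_\ell|/(|G_\ell|+|B_\ell|)-\epsilon$. The key (and only) combinatorial observation is that whoever is picked in a round removes at most one item, so at most one agent can lose its optimal item in addition to the picked agent leaving: hence $|G_{\ell+1}|\ge|G_\ell|-2$ and $|B_{\ell+1}|\le|B_\ell|+1$. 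Starting from $|G_0|=m$, $|B_0|=n-m$, summing over $\ell\le\lfloor m/2\rfloor$ gives welfare $\gtrsim m^2/(8n)$ and ratio $\gtrsim m/(8n)\ge 1/(8\sqrt{n})$ for $m\ge\sqrt{n}$. No Hall-type argument, no Cauchy--Schwarz, no control of non-liker ``last-resort'' grabs is needed, precisely because the argument never claims any specific liked item survives---only that the pool of agents who could still reproduce the optimal match shrinks slowly.

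The two obstacles you identify in your degree-based attack are genuine. In the sparse regime the average $|S_i|$ is below $\sqrt{n}$, so many agents exhaust their liked items early and then pick arbitrarily among $[0,\epsilon)$-valued items; these last-resort grabs can indeed steal item $i$ before any liker of $i$ is reached, breaking the $1/d_i$ lower bound. And controlling $\sum_{i\le m}d_i$ via Hall's condition does not obviously give what you need when $m<n$. The paper's insight is exactly to avoid both issues by tracking a much coarser invariant (one designated item per agent) whose evolution admits the trivial ``minus two per round'' bound.
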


\begin{proof}
Because of Lemma \ref{lem:zeroone}, for the purpose of computing a lower bound on the approximation ratio of random priority, it is sufficient to only consider quasi-combinatorial valuation profiles. Let $\epsilon \leq 1/n^3$. Then, there exists $k \in \NN$ such that 
\[|k-w^{*}(\mathbf{u})|\leq \frac{1}{n^2}, \] 
\noindent where $w^{*}(\mathbf{u})$ is the social welfare of the maximum weight matching on valuation profile $\mathbf{u}$. Since random priority can trivially achieve an expected welfare of $1$ (for any permutation the first agent will be matched to its most preferred item), we can assume that $k\geq\sqrt{n}$, otherwise we are done. 
\noindent Note that the maximum weight matching $\mu^{*} \in O$ assigns $k$ items to agents with $u_i(\mu_i) \in (1-\epsilon,1]$. Since random priority is anonymous and neutral, without loss of generality we can assume that these agents are $\{1,\ldots,k\}$ and for every agent $j \in N$, it holds that $\mu^{*}_j=j$. Thus $u_j(j) \in (1-\epsilon,1]$ for $j=1,\ldots,k$ and $u_j(j) \in [0,\epsilon)$ for $j=k+1,\ldots,n$.  

Consider any run of random priority; one agent is selected in each round. Let $l \in \{0,\ldots,n-1\}$ be any of the $n$ rounds. We will now define the following sets:
\begin{align*} 
U_l &= \{j \in \{1,\ldots,n\}| \textrm{ agent } j \textrm{ has not been selected prior to round } l \} \\
G_l &= \{j \in U_l | u_j(j) \in (1-\epsilon,1] \textrm{ and item } j \textrm{ is still unmatched} \} \\
B_l &= \{j \in U_l | u_j(j) \in [0,\epsilon) \textrm{ or item } j \textrm{ has already been matched to some agent} \}
\end{align*} 
These three families of sets should be interpreted as three sets that change over the course of the execution of random priority. $U_l$ is the set of agents yet to be matched, which is partitioned into $G_l$, the set of ``good'' agents, that guarantee a welfare of almost $1$ when picked, and $B_l$, the set of ``bad'' agents, that do not guarantee any contribution to the social welfare. For the purpose of calculating a lower bound, we will simply bound the sizes of the sets in these families. Obviously, $G_0=\{1,\ldots,k\}$ and $B_0=\{k+1,\ldots,n\}$. 

The probability that an agent $i \in G_l$ is picked in round $l$ of random priority is $|G_l|/(|G_l|+|B_l|)$, whereas the probability that an agent $i \in B_l$ is picked is $|B_l|/(|G_l|+|B_l|)$. By the discussion above, we can assume that whenever an agent from $G_l$ is picked its contribution to the social welfare is at least $1-\epsilon$ whereas the contribution from an agent picked from $B_l$ is less than $\epsilon$. In other words, the expected contribution to the social welfare from round $l$ is at least $|G_l|/(|G_l|+|B_l|)-\epsilon$.  

We will now upper bound $|G_l|$ and lower bound $|B_l|$ for each $l$. Consider round $l$ and sizes $|G_l|$ and $|B_l|$. First suppose that some agent $i$ from $G_l$ is picked and the agent is matched with item $j$. If $j \neq i$ and agent $j$ is in $G_l$, then $|G_{l+1}| =|G_l| -2$ and $|B_{l+1}| = |B_l| + 1$, since agent $j$ no longer has its item from the optimal allocation available and so agent $j$ is in $B_{l+1}$. On the other hand, if $j = i$ or agent $j$ is in $B_l$ then $|G_{l+1}| =|G_l| -1$ and $|B_{l+1}| = |B_l|$. In either case, $|G_{l+1}| \geq |G_l| - 2$ and $|B_{l+1}| \leq |B_l| +1$. Intuitively, the picked agent might take away some item from a good agent and turn it into a bad agent. 

Now suppose that agent $i$ from $B_l$ is picked and the agent is matched with item $j$. If agent $j$ is in $G_l$ then $|G_{l+1}| =|G_l| -1$ and $|B_{l+1}| = |B_l|$, since agent $j$ no longer has its item from the optimal allocation available and so agent $j$ is in $B_{l+1}$. On the other hand, if agent $j$ is in $B_l$ then $|G_{l+1}| =|G_l|$ and $|B_{l+1}| = |B_l| - 1$. In either case, $|G_{l+1}| \geq |G_l| -2$ and $|B_{l+1}| \leq |B_l|  +1 $.

%%%%%%%%%%%% OLD VERSION
%We will now upper bound $|G_l|$ and lower bound $|B_l|$ for each $l$. Consider round $l$ and sizes $|G_l|$ and $|B_l|$. First, if some agent $i$ from $G_l$ is picked, the agent will either be matched with item $i$ or with some other unmatched item $j$ such that $u_i(j) > u_i(i)$. In either case, $|G_{l+1}| \geq |G_l| - 2$ and $|B_{l+1}| \leq |B_l| +1$. Intuitively, the picked agent might take away some item from a good agent and turn it into a bad agent. 

%On the other hand, if some agent $i$ from $B_l$ is picked, the picked agent might take away some item from a good agent and turn it into a bad agent, or be matched with some other unmatched item $j$. But in either case $|G_{l+1}| \geq |G_l| -2$ and $|B_{l+1}| \leq |B_l|  +1 $.

To sum up, in each round $l$ of random priority, we can assume the size of $B_l$ increases by at most $1$ and the size of $G_l$ decreases by at most $2$.  Given this and that $|G_0| = k$ and $|B_0| = n-k$ and that $|G_l|>0$ for $l \leq \lfloor k/2 \rfloor$, we get

\begin{align*}
\EE\left[\sum_{i=1}^{n}u_i(RP(\mathbf{u})_i)\right] \geq \sum_{l=0}^n \left(\frac{|G_l|}{|G_l|+|B_l|} - \epsilon \right) \geq \sum_{l=0}^{\left\lfloor \frac{k}{2} \right\rfloor}  \frac{k-2l}{n-l} - n\epsilon
\end{align*}
and the ratio is
\begin{align*}
\frac{\EE\left[\sum_{i=1}^{n}u_i(RP(\mathbf{u})_i)\right]}{w^{*}(\mathbf{u})} 
&\geq \frac{\sum_{l=0}^{\left\lfloor \frac{k}{2} \right\rfloor} \frac{k-2l}{n-l}-n\epsilon}{k+\frac{1}{n^2}}
\geq \frac{\sum_{l=0}^{\left\lfloor \frac{k}{2} \right\rfloor} \frac{k-2l}{n-l}-n\epsilon}{2k} \\
 &= \sum_{l=0}^{\left\lfloor \frac{k}{2} \right\rfloor} \frac{1-\frac{2l}{k}}{2(n-l)}-\frac{n\epsilon}{2k}
 > \sum_{l=0}^{\left\lfloor \frac{k}{2} \right\rfloor} \frac{1-\frac{2l}{k}}{2n}-\frac{n\epsilon}{2k} \geq \frac{k-11}{8n}-\frac{n\epsilon}{2k}.
\end{align*}

The bound is clearly minimum when $k$ is minimum, that is, $k=\sqrt{n}$. Since this bound holds for any $\mathbf{u} \in C_\epsilon^n$, we get 

\begin{equation*}
ar(RP) = \inf_{\mathbf{u} \in C_\epsilon^n}\frac{\EE[\sum_{i=1}^{n}u_i(J(\mathbf{u})_i)]}{w^{*}(\mathbf{u})}
\geq \frac{\sqrt{n}-11}{8n} -\frac{n\epsilon}{2\sqrt{n}}. 
\end{equation*} 

We can choose $\epsilon$ so that the approximation ratio is at least $\frac{1}{20\sqrt{n}}$ for $n\geq 400$ and for $n \leq 400$, the bound holds trivially since random priority matches at least one agent with its most preferred item. \hfill $\square$  
\end{proof}

Next, we state the following lemma about ordinal mechanisms.

\begin{lemma} Let $J$ be any ordinal mechanism for unit-range representation. Then $ar(J) = O\left(n^{-1/2}\right)$.\label{lem:ordinalupper}
\end{lemma}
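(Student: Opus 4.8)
The plan is to bound $ar(J)$ from above by exhibiting, for the given ordinal mechanism $J$, a single valuation profile on which it collects at most a $O(n^{-1/2})$ fraction of the optimal welfare. The engine is an averaging argument over a family of profiles that are all \emph{ordinally identical}, so that ordinality pins down the mechanism's behaviour across the whole family. First I would record the only two facts about $J$ that are needed: (i) for any fixed profile the marginal allocation matrix $p_{ij}=\Pr[J(\mathbf u)_i=j]$ is doubly stochastic, since every outcome of $J$ is a matching (all row and column sums equal $1$); and (ii) if all profiles in a family share the same ordinal type, then by ordinality $J$ uses the \emph{same} matrix $p$ on every member. Writing the expected welfare on a member $\mathbf u^{(s)}$ as $\sum_{ij}p_{ij}u^{(s)}_i(j)$, linearity gives that the average over the family equals $\sum_{ij}p_{ij}\bar u_{ij}$, where $\bar u$ is the entrywise-averaged profile. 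Since $p$ is a convex combination of permutation matrices (Birkhoff--von Neumann), $\sum_{ij}p_{ij}\bar u_{ij}\le \max_{\mu}\sum_i\bar u_{i\mu_i}=w^*(\bar u)$. Hence some member $\mathbf u^{(s^\ast)}$ has mechanism welfare at most $w^*(\bar u)$, and therefore $ar(J)\le w^*(\bar u)/w^*(\mathbf u^{(s^\ast)})$.

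It then remains to build a family for which $w^*(\bar u)=O(1)$ while every member has $w^*(\mathbf u^{(s)})=\Theta(\sqrt n)$, the whole point being that the good matching of each member is \emph{hidden} from an ordinal mechanism. I would take quasi-combinatorial profiles of the following shape, with $g=\lfloor\sqrt n\rfloor$. There is one ``decoy'' item that every agent ranks first (value $1$ in every member). The agents are split into $g$ groups of size $g$, and group $c$ owns a distinct ``prize'' item $p_c$ that all of its members rank \emph{second}. A member $\mathbf u^{(s)}$ is determined by choosing, in each group independently and uniformly, a single ``active'' agent whose value for its prize is close to $1$; the $g-1$ inactive agents of the group value the same prize by a tiny amount $\delta$, still above all lower-ranked items so that the prize stays at rank two. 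All remaining items are valued near $0$ in a fixed order. Crucially, active and inactive agents differ only in the \emph{cardinal} value of their prize, not in their ranking, so the entire profile has one fixed ordinal type and the reduction above applies. Two short computations then finish the construction: matching each group's active agent to its prize shows $w^*(\mathbf u^{(s)})\ge g-O(1)$, while no matching can exceed $O(\sqrt n)$ because only the $g$ active agents and the single decoy carry non-negligible value; and in $\bar u$ each prize $p_c$ is worth only about $1/g$ to the group-$c$ agents (each is active a $1/g$ fraction of the time) and nothing to the others, so the prizes contribute at most $g\cdot(1/g)=O(1)$ to any matching, the decoy contributes $1$, and all other entries are negligible, giving $w^*(\bar u)=O(1)$.

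Combining, the distinguished member satisfies $ar(J)\le w^*(\bar u)/w^*(\mathbf u^{(s^\ast)})=O(1)/\Theta(\sqrt n)=O(n^{-1/2})$, as required. The main obstacle is the design of the family rather than the averaging step: one must arrange the prizes to \emph{collide} within groups so that the averaged profile $\bar u$ retains no good matching, while still guaranteeing that each individual member does have a matching of weight $\Theta(\sqrt n)$ --- this is exactly the tension that an ordinal mechanism cannot resolve, since within a group all agents look identical and only a $1/g$ fraction of them are valuable. A secondary, routine obstacle is the bookkeeping needed to realise everything as genuine injective unit-range functions: I would fix the active value to $1-\epsilon$, the inactive prize value to $\delta$ with all lower items in $[0,\delta)$, and choose $\epsilon$ and $\delta$ (e.g. $\delta\le 1/n$) small enough that the ``$O(1)$'' and ``$\Theta(\sqrt n)$'' estimates above absorb the additive error terms.
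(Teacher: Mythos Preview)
Your proposal is correct and takes a genuinely different route from the paper.

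The paper's proof invokes Lemma~\ref{lem:anonymous} to assume $J$ is anonymous, then exhibits a \emph{single} profile in which every agent has the \emph{same} ordering of items (item~$j$ preferred to item~$j'$ whenever $j<j'$). Anonymity plus ordinality then force the allocation to be the uniformly random matching, and the cardinal values are chosen so that the first $\lfloor\sqrt n\rfloor$ agents collectively carry $\Theta(\sqrt n)$ welfare in the diagonal matching while the uniform lottery achieves only $O(1)$.

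Your argument never appeals to anonymity. Instead you hold the ordinal type of the whole profile fixed and vary only cardinal values, so that ordinality alone pins down a single doubly-stochastic matrix $p$; the Birkhoff--von~Neumann bound $\sum_{ij}p_{ij}\bar u_{ij}\le w^*(\bar u)$ replaces the ``uniform lottery'' step. The construction is also different in spirit: rather than giving all agents the same ordering, you give each of the $\lfloor\sqrt n\rfloor$ groups its own prize at rank two, and the averaging over ``which agent in each group is active'' is what flattens the averaged profile to $w^*(\bar u)=O(1)$ while each individual member retains $w^*=\Theta(\sqrt n)$.

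What each approach buys: the paper's proof is shorter and needs no family or Birkhoff--von~Neumann, at the price of importing the anonymization lemma. Your approach is self-contained (no reduction to anonymous mechanisms) and the averaging template is reusable for other ordinal lower bounds. Two minor bookkeeping points you should make explicit in a final write-up: handle the leftover $n-\lfloor\sqrt n\rfloor^2$ agents (give them the decoy at rank one and everything else below $\delta$; they do not affect either estimate), and note that injectivity and the unit-range constraint are satisfied by placing all ``near~$0$'' values strictly below $\delta$ with $0$ attained, and choosing $\epsilon,\delta\le 1/n^2$ so that the error terms are absorbed.
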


\begin{proof}
Let $\mathbf{u}=(u_1,u_2,\ldots,u_n)$ be the valuation profile where:
\begin{align*}
u_i(j) &= \begin{cases} 1-\frac{j-1}{n}  &\mbox{for } 1 \leq j \leq i \\
\frac{n-j}{n^2}  & \mbox{otherwise }\end{cases} \ \ &\forall i\in \{1,\ldots,\lfloor\sqrt{n}\rfloor \} \\
u_i(j) &= \begin{cases} 1 &\mbox{for } j=1 \\
\frac{n-j}{n^2} \ \ \ \ \ \  & \mbox{otherwise }\end{cases} \ \  &\forall i \in \{\lfloor\sqrt{n}\rfloor+1,\ldots,n\}  
\end{align*}

By Lemma \ref{lem:anonymous} we can assume that $J$ is anonymous. Notice that the valuation profile is \emph{ordered}, i.e., $u_i(j)>u_i(j')$ whenever $j<j'$ for all $j,j' \in M$ and all $i \in N$. Thus, any anonymous and ordinal mechanism on input $\mathbf{u}$ must output a uniformly random matching, that is, the probability that agent $i$ is matched with item $j$ is the same for all agents $i$, for every $j \in M$. The expected welfare of the mechanism on valuation profile $\mathbf u$ will be
\begin{eqnarray*}
\frac{1}{n}\sum_{i=1}^n\sum_{j=1}^n u_i(j) &\leq& \frac{1}{n}\left[\sum_{i=1}^{\lfloor \sqrt{n}\rfloor} \left(i+\frac{n-i}{n}\right) + \sum_{i=\lfloor \sqrt{n}\rfloor+1}^n \left(1+\frac{n-1}{n}\right)\right] \\
&\leq& 4+\frac{1}{2\sqrt{n}} \leq 5,
\end{eqnarray*}
where in the above expression, we upper bound each term $\frac{n-j}{n^2}$ by $\frac{1}{n}$ and each term $1-\frac{j}{n}$ by $1$. 

On the other hand, the social welfare of the maximum weight matching is
\begin{eqnarray*}
\sum_{i=1}^{\lfloor{\sqrt{n}}\rfloor} \left(1-\frac{i-1}{n}\right) + \sum_{i=\lfloor \sqrt{n} \rfloor+1}^n \frac{n-i}{n^2} \geq \sum_{i=1}^{\lfloor{\sqrt{n}}\rfloor} \left(1-\frac{i-1}{n}\right) \geq \lfloor{\sqrt{n}}\rfloor-1 \geq \frac{\sqrt{n}}{4}. 
\end{eqnarray*}

Where the final inequality holds for $n\geq 4$, the approximation ratio then is at most $\frac{20}{\sqrt{n}}$ for $n\geq 4$, and the bound holds trivially for $n<4$. \hfill $\square$
\end{proof}

Our final lemma provides a matching upper bound on the approximation ratio of any truthful-in-expectation mechanism.

\begin{lemma} Let $J$ be a truthful-in-expectation mechanism for unit-range representation. Then $ar(J)=O\left(n^{-1/2}\right)$. \label{lem:truthfulupper}
\end{lemma}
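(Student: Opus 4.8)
The plan is to first apply Lemma~\ref{lem:anonymous} so that we may assume $J$ is anonymous and truthful-in-expectation, and then to bring in the standard convex-analytic description of truthfulness-in-expectation. Fixing the reports $u_{-i}$ of the other agents, let $x_i(u_i)\in\Delta(M)$ be the vector whose $j$-th coordinate is the probability that $J$ assigns item $j$ to agent $i$. Truthfulness-in-expectation says precisely that $u_i\in\arg\max_{\tilde u_i}\langle u_i,x_i(\tilde u_i)\rangle$, so the indirect utility $h_i(u_i)=\langle u_i,x_i(u_i)\rangle=\max_{\tilde u_i}\langle u_i,x_i(\tilde u_i)\rangle$ is convex with $x_i(u_i)\in\partial h_i(u_i)$. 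This yields both the monotonicity of $x_i$ (raising agent $i$'s reported value for a set $S$ of items can only increase $\sum_{j\in S}x_i(u_i)_j$) and the envelope identity for the change in $h_i$ along any path of reports. These two facts are the only handles truthfulness gives, and the whole proof must be built from them.

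For the hard instance I would use a quasi-combinatorial, \emph{ordered} profile with a triangular top-set structure: set $k=\lfloor\sqrt n\rfloor$, let each agent $i\le k$ value the items $\{1,\ldots,i\}$ in the top band $(1-\epsilon,1]$ and the rest in the bottom band $[0,\epsilon)$, and let every agent $i>k$ value only item $1$ in the top band. Every agent then shares the common ordering $1\succ 2\succ\cdots\succ n$, the matching $i\mapsto i$ for $i\le k$ witnesses $w^*(\mathbf u)=\Omega(\sqrt n)$, and the target is to prove $\EE[\sum_i u_i(J(\mathbf u)_i)]=O(1)$. The intended route is a \emph{hybrid} decomposition relative to the fully symmetric profile in which every agent wants only item $1$: there anonymity forces uniform marginals and the expected welfare is $O(1)$. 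I would then move the agents one at a time from this junk valuation to their target valuation, so that at each step only one agent's report changes, and bound the resulting welfare increment by the envelope identity (the moving agent's own utility gain) plus the externality its change induces on the allocations of all other agents. Summing these increments is what should deliver the $O(1)$ bound, whence the ratio is $O(1/\sqrt n)$ and Lemma~\ref{lem:truthfulupper} follows.

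The step I expect to be the main obstacle is exactly this global accounting, and it is where the cardinal case genuinely departs from the ordinal one of Lemma~\ref{lem:ordinalupper}. The difficulty is twofold. First, monotonicity only \emph{lower}-bounds the probability that a high-value agent receives its coveted item, so incentive compatibility never directly forbids a cardinal mechanism from routing each good item to an agent who values it $\approx 1$; the cap on welfare cannot come from any single-agent deviation and must be extracted from the coupled externality terms in the hybrid (or, alternatively, by embedding $\mathbf u$ in a symmetric family of profiles in which truthfulness together with anonymity prevents the mechanism from reliably identifying the special holder of each good item). Second, one must argue that the combination of the $O(\epsilon)$ freedom available inside the bands and the reshuffling of other agents' items cannot be parlayed into $\omega(1)$ welfare, while keeping all the $O(\epsilon)$ and $O(1/n)$ error terms controlled uniformly in $n$. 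Resolving this tension — turning per-agent convexity and monotonicity into a genuinely global cap on how much a cardinal truthful mechanism can concentrate the $\sqrt n$ good items — is the crux; once it is in hand, combining Lemma~\ref{lem:truthfulupper} with the lower bound of Lemma~\ref{lem:RPlower} gives the unit-range half of Theorem~\ref{thm:main}.
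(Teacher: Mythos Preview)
Your plan does not close the gap you yourself identify. The hybrid/envelope route bounds only the deviating agent's \emph{own} utility change; the externality you impose on the other agents' allocations is not controlled by any incentive constraint, and nothing in your write-up supplies the missing global accounting. Worse, your instance is quasi-combinatorial: under the ``junk'' valuation every item $j\ge 2$ has value in $[0,\epsilon)$, so when agent $i$ misreports into its target profile, the truthfulness inequality you get is $p_i\cdot O(\epsilon)\le(\text{uniform-lottery utility})$, which places no useful cap on $p_i$. Your diagnosis that ``incentive compatibility never directly forbids a cardinal mechanism from routing each good item to an agent who values it $\approx 1$'' is correct \emph{for your profile}, and it is precisely why that profile cannot carry the argument.

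The paper avoids this by abandoning quasi-combinatorial valuations and inserting an \emph{intermediate} tier. In the hard profile $\mathbf u$, agents $1,\ldots,k{+}1$ value their own item $i$ at $1$ but value every other top item $j\le k{+}1$ at $\approx 2/k$ (not $\approx 0$), while agents $1,k{+}2,\ldots,n$ share a single common valuation. The deviation is then run in the \emph{opposite} direction from yours: take agent $i$'s true report to be the common valuation $u'_i$ (so item $i$ is worth $\approx 2/k$ to her), and let the lie be the ``special'' report $u_i$. Anonymity on the profile $\mathbf u^i=(u'_i,u_{-i})$ forces $\EE[u'_i(J(\mathbf u^i)_i)]\le 4/(n-k+1)$, and truthfulness then gives $p_i\cdot\bigl(\tfrac{2}{k}-\tfrac{k+1}{n}\bigr)\le \EE[u'_i(J(\mathbf u)_i)]\le 4/(n-k+1)$, i.e.\ $p_i=O(k/n)$. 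Summing over $i=2,\ldots,k{+}1$ bounds the welfare of $J$ on $\mathbf u$ by $O(1)+kp=O(1)$ once $k\approx\sqrt n$. The whole argument is a single-deviation bound per special agent; there is no hybrid, no envelope identity, and no externality bookkeeping. The idea you are missing is exactly this intermediate $2/k$ tier in the common valuation: it is what converts truthfulness into an \emph{upper} bound on $p_i$.
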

\begin{proof}
By Lemma \ref{lem:anonymous}, we can assume that Mechanism $J$ is anonymous. Let $k\ge 2$ be a parameter to be chosen later and let $\mathbf{u}=(u_1,u_2,\ldots, u_n)$ be the valuation profile where
\begin{align*}
u_i(j) &= \begin{cases} 1, &\mbox{for } j = i \\
\frac{2}{k}-\frac{j}{n}, &\mbox{for } 1\leq j \leq k+1, j \neq i \\
\frac{n-j}{n^2}, &\mbox{otherwise }\end{cases} \ \ &\forall i \in \{1,\ldots,k+1\} \\
u_i(j) &= \begin{cases} 1, &\mbox{for } j=1 \\
\frac{2}{k}-\frac{j}{n}, &\mbox{for } 2 \leq j \leq k+1 \\
\frac{n-j}{n^2}, &\mbox{otherwise }
\end{cases}  \ \  &\forall i \in \{k+2,\ldots,n\}  
\end{align*}

For $i=2,\ldots,k+1$, let $\mathbf{u^i} = (u_i',u_{-i})$ be the valuation profile where all agents besides agent $i$ have the same valuations as in $\mathbf{u}$ and $u_i'=u_{k+2}$. Note that when agent $i$ on valuation profile $\mathbf{u^i}$, reports $u_i$ instead of $u_i'$, the resulting valuation profile is $\mathbf{u}$. Since $J$ is anonymous and $u_i'=u_1=u_{k+2}=\ldots=u_n$, then agent $i$ receives at most a uniform lottery among these agents on valuation profile $\mathbf{u^i}$ and so it holds that 
\begin{eqnarray*}
\EE[u_i'(J(\mathbf{u^i})_i)] &\leq& \frac{1}{n-k+1}+\sum_{j=2}^{k+1}\frac{1}{n-k+1}\left(\frac{2}{k}-\frac{j}{n}\right) + \sum_{j=k+2}^{n} \frac{1}{n-k+1}\cdot \frac{n-j}{n^2} \\
&\leq& \frac{4}{n-k+1}
\end{eqnarray*}

Next observe that since $J$ is truthful-in-expectation, agent $i$ should not increase its expected utility by misreporting $u_i$ instead of $u_i'$ on valuation profile $\mathbf{u^i}$, that is,
\begin{equation}
\EE[u_i'(J(\mathbf{u^i})_i)] \geq \EE[u_i'(J(\mathbf{u})_i)] \label{truthfulness}
\end{equation}

For all $i=2,\ldots,k+1$, let $p_i$ be the probability that $J(\mathbf{u})_i=i$. Then, it holds that 
\begin{equation*}
\EE[u_i'(J(\mathbf{u})_i)] \geq p_i\left(\frac{2}{k}-\frac{i}{n}\right) \geq p_i\left(\frac{2}{k}-\frac{k+1}{n}\right) 
\end{equation*}

and by Inequality (\ref{truthfulness}) we get
\begin{eqnarray*}
&& p_i\left(\frac{2}{k}-\frac{k+1}{n}\right) \leq \frac{4}{n-k+1} \\
&=>& p_i \leq \frac{4}{n-k+1}\cdot \frac{kn}{2n-k(k+1)} \leq \frac{4}{n-k}\cdot \frac{kn}{2n-(k+1)^2} 
\end{eqnarray*}   

Let $p=\frac{4}{n-k} \cdot \frac{kn}{2n-(k+1)^2}$, i.e. for all $i$, $p_i \leq p$. We will next calculate an upper bound on the expected social welfare achieved by $J$ on valuation profile $\mathbf{u}$. 

For item $j=1$, the contribution to the social welfare is upper bounded by $1$. Similarly, for each item $j=k+2,\ldots,n$, its contribution to the social welfare is upper bounded by $1/n$. Overall, the total contribution by item 1 and items $k+2,\ldots,n$ will be upper bounded by $2$. 

We next consider the contribution to the social welfare from items $j=2,\ldots,k+1$. Define the random variables
\[
    X_j= 
\begin{cases}
    1,& \text{if } J(\mathbf{u})_j=j\\
    \frac{2}{k}-\frac{j}{n},              & \text{otherwise}
\end{cases}
\]
  
The contribution from items $j=2,\ldots,k+1$ is then $\sum_{j=2}^{k+1} X_j$ and so we get
\[ \EE\left[\sum_{j=2}^{k+1} X_j\right] = \sum_{j=2}^{k+1} \EE\left[X_j\right] \leq \sum_{j=2}^{k+1} \left(p + \frac{2}{k}-\frac{j}{n}\right) \leq kp+2 
\]

Overall, the expected social welfare of mechanism $J$ is at most $4+pk$ while the social welfare of the optimal matching is $k+1 + \sum_{i=k+2}^n \frac{n-i}{n^2}$ which is at least $k$. Since $p=\frac{4}{n-k} \cdot \frac{kn}{2n-(k+1)^2}$, the approximation ratio of $J$ then is
\begin{equation*}
ar(J)\le \frac{4+pk}{k}=\frac{4}{k}+\frac{4}{n-k} \cdot \frac{kn}{2n-(k+1)^2}
\end{equation*}

Let $k=\lfloor \sqrt{n} \rfloor-1$ and note that $\sqrt{n}-2 \le k \le \sqrt{n}-1$. Then,

\begin{eqnarray*}
ar(J)&\le& \frac{4}{k}+\frac{4}{n-k} \cdot \frac{kn}{2n-(k+1)^2} 
\le \frac{4}{\sqrt{n}-2}+\frac{4}{n-\sqrt{n}+1} \cdot \frac{(\sqrt{n}-1)n}{2n-(\sqrt{n})^2} \\
%&=& \frac{3}{\sqrt{n}-2}+\frac{4(\sqrt{n}-1)}{n-\sqrt{n}+1} \\
%&\leq& \frac{3}{\sqrt{n}-2}+\frac{4(\sqrt{n}-1)}{n-\sqrt{n}} \\
&\leq& \frac{4}{\sqrt{n}-2}+\frac{4}{\sqrt{n}}
\leq \frac{12}{\sqrt{n}}+\frac{4}{\sqrt{n}} 
= \frac{16}{\sqrt{n}},
\end{eqnarray*}
The last inequality holds for $n\geq 9$ and for $n < 9$ the bound holds trivially. This completes the proof.\hfill $\square$
\end{proof}

\section{Unit-sum valuation functions}\label{sec:unitsum}

In this section, we assume that the representation of the valuation functions is unit-sum. We prove Theorem \ref{thm:main} using the following three lemmas. The first lemma provides a lower bound on the approximation ratio of random priority.

\begin{lemma}\label{lem:RPlowerUS}
For unit-sum representation, $ar(RP) = \Omega\left(n^{-1/2}\right)$.
\end{lemma}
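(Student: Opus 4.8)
The plan is to combine two complementary lower bounds on the expected welfare of random priority whose strengths trade off exactly at $w^{*}(\mathbf{u})=\Theta(\sqrt n)$. Throughout I would use the anonymity and neutrality of $RP$ (as in the set-up of Lemma~\ref{lem:zeroone}) to assume the optimal matching is $\mu^{*}_i=i$, write $v_i=u_i(i)$ for the value agent $i$ has for its optimal item, and relabel agents and items so that $v_1\ge v_2\ge\cdots\ge v_n$. Then $w^{*}(\mathbf{u})=\sum_i v_i$ and, crucially for the unit-sum case, $v_i\le 1$ for every $i$.

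First I would establish a baseline bound with no analogue in the unit-range proof: $\EE[\sum_i u_i(RP(\mathbf{u})_i)]\ge 1$. In $RP$ agent $i$ occupies each of the $n$ positions with probability $1/n$, and if it is the $(l+1)$-st to pick then at most $l$ items are gone, so it receives at least its $(l+1)$-st largest value. Since the sorted values of a single agent sum to exactly $1$ under unit-sum, each agent contributes at least $1/n$ in expectation and the total welfare is at least $1$. This already gives $ar(RP)\ge 1/w^{*}(\mathbf{u})=\Omega(1/\sqrt n)$ whenever $w^{*}(\mathbf{u})=O(\sqrt n)$.

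Second, for large $w^{*}(\mathbf{u})$ I would reuse the good/bad bookkeeping of Lemma~\ref{lem:RPlower} almost verbatim. Fixing $k$, I would declare the top-$k$ agents ``good'' and set $\tau=v_k$; a good agent still holding its optimal item contributes at least $v_i\ge\tau$ when picked, while the set dynamics ($|G_l|$ drops by at most $2$ per round and $|U_l|=n-l$) are unchanged, giving $\EE[\text{welfare}]\ge \tau\sum_{l=0}^{\lfloor k/2\rfloor}\tfrac{k-2l}{n-l}=\Omega(v_k k^2/n)$. To convert this into a bound on $w^{*}$, I would prove the elementary inequality $\max_k v_k k^2\ge w^{*}(\mathbf{u})^2/4$: setting $M=\max_k v_k k^2$ we have $v_k\le\min(1,M/k^2)$, and $\sum_{k\ge 1}\min(1,M/k^2)\le 2\sqrt M$, so $w^{*}(\mathbf{u})\le 2\sqrt M$. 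Hence $\EE[\text{welfare}]=\Omega(w^{*}(\mathbf{u})^2/n)$ and $ar(RP)=\Omega(w^{*}(\mathbf{u})/n)$ in this regime.

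Finally I would combine the two estimates: $ar(RP)\ge \max\bigl(1/w^{*}(\mathbf{u}),\,c\,w^{*}(\mathbf{u})/n\bigr)$ for a constant $c$, and since the product of the two arguments is the constant $c/n$, their maximum is at least $\sqrt{c/n}=\Omega(1/\sqrt n)$, the worst case occurring precisely at $w^{*}(\mathbf{u})=\Theta(\sqrt n)$. I expect the main obstacle to lie in the large-$w^{*}$ regime: checking that the counting argument of Lemma~\ref{lem:RPlower} survives the switch from quasi-combinatorial $\{0,1\}$-type values to arbitrary $v_i$ (so that ``good'' contributions are measured by $\tau=v_k$ rather than by $1-\epsilon$), and verifying that the $k$ attaining $\max_k v_k k^2$ satisfies $k=\Omega(\sqrt n)$, so that the lower-order terms in that summation are absorbed; small $n$ is handled trivially since $RP$ always matches at least one agent to its favourite item. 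The baseline bound and the inequality $\max_k v_k k^2\ge w^{*}(\mathbf{u})^2/4$ are the genuinely new unit-sum ingredients.
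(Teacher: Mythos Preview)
Your argument is correct, and the small-$w^{*}$ half is verbatim the paper's: both prove $\EE[\sum_i u_i(RP(\mathbf{u})_i)]\ge 1$ by noting that agent $i$ in position $l+1$ gets at least its $(l+1)$-st favourite item, and that under unit-sum these values sum to $1$.

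Where you diverge is in the large-$w^{*}$ regime. The paper does \emph{not} redo the good/bad accounting; instead it reduces to Lemma~\ref{lem:RPlower} as a black box by a profile surgery: it sets each agent's least-preferred value to $0$, then appends a new $(n{+}1)$-st agent and item (every old agent values the new item at $1$, the new agent cares only for the new item), obtaining a genuine unit-range profile $\mathbf{u}''$ on which Lemma~\ref{lem:RPlower} applies. One then checks that these edits change $RP$'s welfare and the optimum by $O(1)$, which is absorbed since $w^{*}(\mathbf{u})=\Omega(\sqrt n)$. Your route instead reruns the $G_l/B_l$ argument with the threshold $\tau=v_k$ in place of $1-\epsilon$ and couples it to the clean inequality $\max_k v_k k^2\ge w^{*}(\mathbf{u})^2/4$; this yields $\EE[\text{welfare}]=\Omega(w^{*}(\mathbf{u})^2/n)$ directly. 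Both are sound. The paper's reduction is shorter because it reuses Lemma~\ref{lem:RPlower} wholesale; your version is more self-contained and makes the trade-off $\max(1/w^{*},\,c\,w^{*}/n)$ explicit, and it sidesteps the somewhat ad hoc ``add a dummy agent/item'' construction.

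Two remarks on the obstacles you flag. First, the $G_l/B_l$ dynamics in Lemma~\ref{lem:RPlower} are purely combinatorial (who is picked, which optimal items remain); quasi-combinatorial values enter only when lower-bounding a good agent's contribution by $1-\epsilon$, so replacing that constant by $\tau=v_k$ is the only change needed. Second, your worry about ``lower-order terms'' and whether the maximising $k$ is $\Omega(\sqrt n)$ is unnecessary: since $\sum_{l=0}^{\lfloor k/2\rfloor}(k-2l)\ge k^2/4$ for every $k\ge 1$, you already get $\EE[\text{welfare}]\ge v_k k^2/(4n)$ without any additive slack, and then $\max_k v_k k^2\ge w^{*2}/4$ finishes it regardless of which $k$ attains the maximum. (It is nonetheless true, via $v_k\le 1$, that the maximiser satisfies $k\ge w^{*}/2$, but you don't need this.)
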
  
 
\begin{proof}
Let $\mathbf{u}$ be any unit-sum valuation profile and let $C$ be the constant in the bound from Lemma \ref{lem:RPlower}. Suppose first that the $w^{*}(\mathbf{u})< 4 \sqrt{n} / C$. We will show that random priority guarantees an expected social welfare of $1$, which proves the lower bound for this case. Consider any agent $i$ and notice that in random priority, the probability that the agent is picked by the $l$'th round is $l/n$, for any $1\leq l \leq n$ and hence the probability of the agent getting one of its $l$ most preferred items is at least $l/n$. Let $u_i^l$ be agent $i$'s valuation for its $l$'th most preferred item; agent $i$'s expected utility for the first round is then at least $u_i^1/n$. For the second round, in the worst case, agent $i$'s most preferred item has already been matched to a different agent and so the expected utility of the agent for the first two rounds is at least $u_i^1/n + u_i^2/n$. By the same argument, agent $i$'s expected utility after $n$ rounds is at least $\sum_{i=1}^n u_i^l/n = 1/n$. Since this holds for each of the $n$ agents, the expected social welfare is at least $1$.

Suppose now $w^{*}(\mathbf{u})\geq 4\sqrt{n} / C$. We will transform $\mathbf{u}$ to a unit-range valuation profile $\mathbf{u''}$. By Lemma \ref{lem:RPopt}, the optimal allocation can be achieved by a run of random priority, so we know that in the optimal allocation at most $1$ agent will be matched with its least preferred item. Now consider the valuation profile $\mathbf{u}'$ where each agent $i$'s valuation for its least preferred item is set to 0 (unless it already is 0) and the rest of the valuations are as in $\mathbf{u}$. Since the ordinal preferences of agents are unchanged, random priority performs worse on this valuation profile, and because of Lemma \ref{lem:RPopt}, $w^{*}(\mathbf{u'}) \geq w^{*}(\mathbf{u})-1/n$. Next consider the valuation profile 
\[
\mathbf{u}'' = \begin{pmatrix}
    \mathbf{u}' & \mathbf{1} \cr
    \mathbf{o}^T & 1 \end{pmatrix}
 \] 
where $\mathbf{o} \in \RR^n$ and $\mathbf{o}_j = (j-1)/n^5$. That is, $\mathbf{u}''$ has $n+1$ agents and items, where agents $1,...,n$ have the same valuations for items $1,...,n$ as in $\mathbf{u}'$, every agent has a valuation of 1 for item $n+1$, and agent $n+1$ only has a significant valuation for item $n+1$. Notice that $\mathbf{u}''$ is a unit-range valuation profile, and $w^{*}(\mathbf{u''}) \geq  w^{*}(\mathbf{u'})+1$. Furthermore, $\EE\left[\sum_{i=1}^n u_i(RP(\mathbf{u'}))\right] \geq \EE\left[\sum_{i=1}^n u_i(RP(\mathbf{u''}))\right] -2$ and hence 
%\[
%	\mathbb{E}_{S.W.}[\mathbf{u}] \geq \mathbb{E}_{S.W.}[\mathbf{u}'] \geq \mathbb{E}_{S.W.}[\mathbf{u}''] -2
%\] 
%\[
%	OPT[\mathbf{u}] \leq OPT[\mathbf{u}']+\frac{1}{n} \leq OPT[\mathbf{u}'']+\frac{1}{n}-1 \leq OPT[\mathbf{u}'']
%\] 
\begin{eqnarray*}
	\frac{\EE\left[\sum_{i=1}^n u_i(RP(\mathbf{u})_i)\right]}{w^{*}(\mathbf{u})} 
	&\geq& \frac{\EE\left[\sum_{i=1}^n u_i(RP(\mathbf{u'})_i)\right]}{w^{*}(\mathbf{u'}) + 1/n} 
	\geq \frac{\EE\left[\sum_{i=1}^n u_i(RP(\mathbf{u''})_i)\right] - 2}{w^{*}(\mathbf{u''}) + 1/n - 1} \\
	&\geq& \frac{\EE\left[\sum_{i=1}^n u_i(RP(\mathbf{u''})_i)\right]}{w^{*}(\mathbf{u''})} - \frac{2}{w^{*}(\mathbf{u''})}
	\geq \frac{C}{\sqrt{n}} -  \frac{2}{w^{*}(\mathbf{u})} \\
	&\geq& \frac{C}{\sqrt{n}} -  \frac{2}{4\sqrt{n} / C}
	%= \frac{C}{\sqrt{n}} -  \frac{C}{2\sqrt{n}}
	= \frac{C}{2\sqrt{n}}.
\end{eqnarray*}
This completes the proof. \hfill $\square$
\end{proof}

%The intuition for the proof is the following. We start from an arbitrary unit-sum valuation profile $\mathbf{u}$ and first consider the case when the social welfare of the optimal matching $w^{*}(\mathbf{u})$ is less than $\sqrt{n}$. Then, by observing that for each agent $i$, the probability that it is matched with one of its first $i$ items is at least $i/n$, we lower bound the expected social welfare of random priority by $1$, which gives us the bound. When $w^{*}(\mathbf{u})\geq \sqrt{n}$, we ``turn'' $\mathbf{u}$ into a new profile $\mathbf{v}$ which is unit-range, by introducing a new item $k$ for which $v_i(k)=1$ for all agents and a new agent and setting $v_i(j_i)=0$ for all agents $i$ and some items $j_i$. This can be done with only a constant change to the approximation ratio. Then, since the profile is unit-range, Lemma \ref{lem:RPlowerUS} follows by Lemma \ref{lem:RPlower}. See Appendix for the full proof.

The next lemma bounds the approximation ratio of any ordinal (not necessarily truthful-in-expectation) mechanism. 

\begin{lemma}\label{lem:ordinalupperUS}
Let $J$ be an ordinal mechanism for unit-sum representation. Then $ar(J) = O\left(n^{-1/2}\right)$.
\end{lemma}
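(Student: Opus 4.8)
The plan is to produce a single unit-sum valuation profile on which every anonymous ordinal mechanism gets expected social welfare $O(1)$ while the optimum is $\Omega(\sqrt n)$; by Lemma~\ref{lem:anonymous} it suffices to treat an anonymous (randomized) $J$. The first point to realize is that the unit-range construction of Lemma~\ref{lem:ordinalupper} cannot be reused. In the unit-sum normalization an \emph{ordered} profile (all agents sharing the ranking $1 \succ 2 \succ \cdots \succ n$) has optimal welfare only $O(\log n)$: if agent $i$ is matched to its rank-$r$ item then its top $r$ values sum to at most $1$, so $u_i(\mu_i) \le 1/r$, whence $\sum_i u_i(\mu_i) \le \sum_{r=1}^n 1/r = O(\log n)$. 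Since a uniform random matching always yields welfare exactly $1$ under unit-sum, forcing it would only give a ratio $\Omega(1/\log n)$. Hence a genuinely non-ordered profile is needed, and the challenge is to keep it hard for ordinal mechanisms.

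The construction I would use hides the good matching among ordinally indistinguishable agents. Set $m=\lfloor\sqrt n\rfloor$ and designate items $1,\dots,m$ as \emph{valuable}. Partition the agents into $m$ groups $B_1,\dots,B_m$, each of size $s=\Theta(\sqrt n)$, where every agent in $B_t$ ranks item $t$ first and then ranks all remaining items in one common fixed order (implemented by tiny $\Theta(1/n^{5})$ perturbations that also break ties, keeping every function injective and unit-sum). Inside $B_t$ there is exactly one \emph{real} agent whose value for item $t$ is $1-o(1)$, all its other values negligible, while the remaining \emph{fake} agents of $B_t$ have a nearly uniform valuation with item $t$ merely on top, worth only $\approx 1/n$. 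The crucial feature is that, within each group, the real and the fake agents have \emph{identical ordinal preferences}.

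The key step is the symmetry argument. Because $J$ is ordinal its output depends only on the reported rankings, and because the members of $B_t$ are ordinally identical, anonymity forces $J$ to treat them symmetrically: permuting the members of $B_t$ permutes their assignments, so each member of $B_t$ obtains item $t$ with the same probability $q_t$. As item $t$ goes to at most one agent, $|B_t|\,q_t \le 1$, so $q_t \le 1/|B_t| = O(1/\sqrt n)$, independently of whether item $t$ ever leaves $B_t$. In particular the real agent of $B_t$ gets item $t$ with probability $O(1/\sqrt n)$ and otherwise receives an item it values negligibly; summing over the $m$ real agents bounds their total expected contribution by $m/s = O(1)$. The fake agents each contribute at most their top value $\approx 1/n$, for a total of $O(1)$ over all of them, and the non-valuable items contribute negligibly, so the expected welfare of $J$ is $O(1)$. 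Meanwhile the matching sending each real agent to its item $t$ already has welfare $m\bigl(1-o(1)\bigr)=\Omega(\sqrt n)$, so $w^{*}(\mathbf{u})=\Omega(\sqrt n)$ and the ratio is $O(1/\sqrt n)$.

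I expect the main obstacle to be exactly the point flagged above: the ordered-profile / uniform-random argument that works for unit-range degrades to $\Omega(1/\log n)$ here, so one must invent the group / real-fake gadget and then verify that it is simultaneously (i) a legal injective unit-sum profile, (ii) ordinally symmetric within groups so that the bound $q_t\le 1/|B_t|$ applies, and (iii) still possessed of an $\Omega(\sqrt n)$ optimum. The rest — fixing the perturbations, checking unit-sum, handling the $O(\sqrt n)$ leftover agents as extra fakes, and adding up the $O(1)$ contributions — is routine.
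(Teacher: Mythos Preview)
Your proposal is correct and is essentially the paper's own argument. The paper likewise partitions the $n$ agents into $\sqrt{n}$ groups of $\sqrt{n}$ ordinally identical agents, with one ``real'' agent per group who values its target item at nearly $1$ and the rest having spread-out values; anonymity together with ordinality then caps each real agent's probability of receiving its target at $1/\sqrt{n}$, giving $O(1)$ expected welfare against an $\Omega(\sqrt{n})$ optimum. The only cosmetic difference is that the paper's fake agents concentrate their mass on the $\sqrt{n}$ valuable items (each worth about $1/\sqrt{n}$) rather than being uniform over all items as in your sketch; either choice works. Your observation that the ordered-profile trick from the unit-range case degrades to $\Omega(1/\log n)$ under unit-sum, and hence that a non-ordered construction is forced, is exactly the right diagnosis.
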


\begin{proof}
Assume for ease of notation that $n$ is a square number; the proof can easily be adapted to the general case. By Lemma \ref{lem:anonymous}, we can assume without loss of generality that $J$ is anonymous. We will use the following valuation profile $\mathbf{u}$ where $\forall i \in \{1,...,\sqrt{n}\}$:
\begin{align*}
	u_i(j) &= \begin{cases} 
		1-\sum_{j\neq i} u_i(j), &\mbox{for } j = i , j \leq \sqrt{n} \\
		\frac{n-j}{10n^5}, &\mbox{otherwise }\end{cases}  \\
	u_{i+l  \sqrt{n}}(j) &= \begin{cases} 
		1-\sum_{j \neq i} u_i(j), &\mbox{for } j=i , j \leq \sqrt{n}\\
		\frac{1}{\sqrt{n}}-\frac{j}{10n^2}, &\mbox{for } j\neq i , j \leq \sqrt{n}  \\
		\frac{n-j}{10n^5}, &\mbox{otherwise } \end{cases} \ \ , \ \ l \in \{1,...,\sqrt{n}-1\} 
\end{align*}
Intuitively, $\mathbf{u}$ is a valuation profile where for each $1 \leq i \leq \sqrt{n}$, agent $i$'s valuation function induces the same ordering as agent $(i+l \cdot \sqrt{n})$'s valuation function, for $1 \leq l \leq \sqrt{n}-1$. For agent $i=1,...,\sqrt{n}$, because of anonymity, agent $i$ can at most expect to get a uniform lottery over all the items with each of the other $\sqrt{n}-1$ agents that have the same ordering of valuations. For agents $\sqrt{n}+1,\ldots,n$, the contribution to the social welfare from items $1,...,\sqrt{n}$ is at most $2$ since their valuations for these items are bounded by $2/\sqrt{n}$, and their contribution to the social welfare from items $\sqrt{n}+1,...,n$ is similarly bounded by 1. Thus we can write an upper bound on the expected welfare as:

\begin{equation*}
\sum_{i=1}^{\sqrt{n}} \EE\left[u_i(J(\mathbf{u})_i\right)] + \sum_{i=\sqrt{n}+1}^n \EE\left[u_i(J(\mathbf{u})_i)\right] 
\leq \sum_{i=1}^{\sqrt{n}} \frac{1}{\sqrt{n}} + 3 = 4, 
\end{equation*} 
while the social welfare of the optimal allocation is at least $\sqrt{n} - 1/10n^3$. From this, we get $ar(J) \leq 8/\sqrt{n}$. \hfill $\square$
\end{proof}
  
Finally, the upper bound for any truthful-in-expectation mechanism is given by the following lemma.

\begin{lemma}\label{lem:truthfulupperUS}
Let $J$ be a truthful-in-expectation mechanism for unit-sum representation. Then $ar(J) = O\left(n^{-1/2}\right)$.
\end{lemma}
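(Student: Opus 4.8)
The plan is to mirror the unit-range argument of Lemma~\ref{lem:truthfulupper}, but to exploit the fact that under the unit-sum normalization the total value of every reported function is exactly $1$, which makes the deviation bound essentially free. By Lemma~\ref{lem:anonymous} I may assume $J$ is anonymous. I would fix a parameter $k=\lfloor\sqrt n\rfloor$ and construct a profile $\mathbf u$ with $k$ ``special'' agents $1,\dots,k$, where agent $i$ concentrates almost all of its unit of value on item $i$ (so $u_i(i)\approx 1$ and everything else is negligible), together with $n-k$ ``copy'' agents $k+1,\dots,n$ all reporting one common function $v$ that spreads its value roughly uniformly over the $k$ special items, i.e.\ $v(j)\approx 1/k$ for $j\le k$ and negligible elsewhere. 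Tiny, strictly decreasing perturbations (of order $n^{-5}$, as in the previous lemmas) are added to keep every function injective and genuinely unit-sum; these contribute only $O(n^{-3})$ to any welfare and can be ignored.

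The heart of the argument is a single-agent deviation. For each special agent $i\le k$, I would consider the profile $\mathbf u^i$ obtained from $\mathbf u$ by having agent $i$ replace its true report by $v$. In $\mathbf u^i$ there are exactly $n-k+1$ agents reporting $v$, so by anonymity each of them receives any fixed item with the same probability, hence with probability at most $1/(n-k+1)$. Since the values of $v$ sum to $1$, this immediately yields $\EE[v(J(\mathbf u^i)_i)]\le 1/(n-k+1)$ --- the step that is delicate under unit-range but trivial here. Truthfulness-in-expectation applied to agent $i$ (whose true function in $\mathbf u^i$ is $v$, and who can misreport its original $u_i$ to recreate $\mathbf u$) then gives $\EE[v(J(\mathbf u)_i)]\le 1/(n-k+1)$; bounding the left side below by $p_i\,v(i)$, where $p_i=\Pr[J(\mathbf u)_i=i]$, produces $p_i\le \frac{1}{v(i)(n-k+1)}=O\!\left(\frac{k}{n-k}\right)$.

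Finally I would bound the expected social welfare of $J$ on $\mathbf u$ item by item. Item $i$ (for $i\le k$) is worth about $1$ only to agent $i$ and at most $v(i)\approx 1/k$ to anyone else, so its expected contribution is at most $p_i\cdot 1+(1-p_i)\cdot O(1/k)\le p_i+O(1/k)$; summing over the $k$ special items gives $\sum_i p_i + O(1)=O\!\left(\frac{k^2}{n-k}\right)+O(1)=O(1)$ once $k=\lfloor\sqrt n\rfloor$, and the remaining items contribute negligibly. The optimal matching, by assigning item $i$ to special agent $i$, has welfare at least $\sum_{i\le k}u_i(i)=\Omega(\sqrt n)$. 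Dividing gives $ar(J)=O(1/\sqrt n)$.

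I expect the main obstacle to be the design of the profile rather than the deviation analysis. The unit-sum constraint creates a genuine tension: the bound on $p_i$ is useful only when $v(i)=\Omega(1/k)$, yet the $k$ special items must share total mass at most $1$ under $v$, forcing $v(i)=O(1/k)$. Thus $v(i)=\Theta(1/k)$ is essentially forced, and the construction works only on the knife's edge $k=\Theta(\sqrt n)$, where $k^2/(n-k)=\Theta(1)$ while the optimum is $\Theta(\sqrt n)$. Checking that these competing requirements can be met simultaneously by an explicit injective unit-sum profile, and that the truthfulness deviation is oriented correctly, is where the care is needed.
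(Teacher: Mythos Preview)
Your proposal is correct and follows essentially the same approach as the paper: anonymity bounds the deviant's expected utility under the copy report, truthfulness transfers this to a bound on $p_i=\Pr[J(\mathbf u)_i=i]$, and item-by-item accounting yields $O(1)$ expected welfare against an $\Omega(\sqrt n)$ optimum. The paper obtains its unit-sum instance by simply scaling the unit-range profile of Lemma~\ref{lem:truthfulupper} by $1/10$ and re-normalizing the top entry, whereas you design a slightly cleaner profile (uniform $v(j)\approx 1/k$ on the special items) that exploits $\sum_j v(j)=1$ to get the deviant bound $\EE[v(J(\mathbf u^i)_i)]\le 1/(n-k+1)$ without computation; otherwise the arguments are identical.
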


\begin{proof}
Intuitively, the lemma is true because the valuation profile used in the proof of Lemma \ref{lem:truthfulupper} can be easily modified in a way such that all rows of the matrices of valuations sum up to one. Specifically, consider the following valuation profile:
\begin{align*}
u_i(j) &= \begin{cases} 1-\sum_{j\neq i} u_i(j), &\mbox{for } j = i \\
\frac{2}{10k}-\frac{j}{10n}, &\mbox{for } 1\leq j \leq k+1, j \neq i \\
\frac{n-j}{10n^2}, &\mbox{otherwise }\end{cases} \ \ &\forall i \in \{1,\ldots,k+1\} \\
u_i(j) &= \begin{cases} 1-\sum_{j \neq 1} u_i(j), &\mbox{for } j=1 \\
\frac{2}{10k}-\frac{j}{10n}, &\mbox{for } 1 < j \leq k+1 \\
\frac{n-j}{10n^2}, &\mbox{otherwise }
\end{cases}  \ \  &\forall i \in \{k+2,\ldots,n\}  
\end{align*}

Note that this is exactly the same valuation profile used in the proof of Lemma \ref{lem:truthfulupper} where all entries are divided by ten, except those where the valuation is $1$, which are now equal to $1$ minus the sum of the valuations for the rest of the items. This modification will only carry a factor of $1/10$ through the calculations and hence the proven bound will be the same asymptotically. \hfill $\square$
\end{proof}

\section{Extensions and special cases} \label{sec:future}

\subsection{Allowing ties}

Our results extend if we allow ties in the image of the valuation function. All of our upper bounds hold trivially. For the approximation guarantee of random priority, first the mechanism clearly must be equipped with some tie-breaking rule to settle cases where indifferences appear. For all natural (fixed before the execution of the mechanism) tie-breaking rules the lower bounds still hold. To see this, consider any valuation profile with ties and a tie-breaking rule for random priority. We can simply add sufficiently small quantities $\epsilon_{ij}$ to the valuation profile according to the tie-breaking rule and create a new profile without ties. The assignment probabilities of random priority will be exactly the same as for the version with ties, and random priority achieves an $\Omega(1/\sqrt{n})$ approximation ratio on the new profile. Then since $\epsilon_{ij}$ were sufficiently small, the same bound holds for the original valuation profile.

\subsection{$\mathbf{[0,1]}$ valuation functions}

All of our results apply to the extension of the unit-range representation where $0$ is not required to be in the image of the function, that is $\max_ju_i(j)=1$ and for all $j$, $u_i(j) \in [0,1]$. This representation captures scenarios where agents can be more or less indifferent between every single item.  Since every unit-range valuation profile is also a valid profile for this representation, the upper bounds hold trivially. For the approximation ratio of random priority, we obtain the following corollary. 
  
\begin{corollary}\label{col:upperunitsum}
The approximation ratio of random priority, for the setting with $[0,1]$ valuation functions is $\Omega\left(n^{-1/2}\right)$.
\end{corollary}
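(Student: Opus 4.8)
The plan is to reduce the $[0,1]$ setting to the strict unit-range setting already handled by Lemma~\ref{lem:RPlower}, along the same lines as the reduction in the proof of Lemma~\ref{lem:RPlowerUS}. Fix any $[0,1]$ valuation profile $\mathbf{u}$, and let $C$ be the constant from Lemma~\ref{lem:RPlower}. I would first dispatch the easy regime: since $\max_j u_i(j)=1$ for every agent, the first agent in any priority order is matched to an item it values at $1$, so $\EE[\sum_{i} u_i(RP(\mathbf{u})_i)] \geq 1$ unconditionally. Hence if $w^{*}(\mathbf{u}) < 2$ the ratio is already at least $1/2 = \Omega(n^{-1/2})$, and it remains to treat the case $w^{*}(\mathbf{u}) \geq 2$.

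For the main case, I would construct a genuine unit-range profile $\mathbf{u}'$ from $\mathbf{u}$ by lowering each agent's valuation for its least preferred item to $0$ (leaving it unchanged if it is already $0$) and keeping all other valuations as in $\mathbf{u}$. Because valuations are injective, $\mathbf{u}'$ is a valid unit-range profile: its maximum is still $1$ (that entry is never the least preferred one) and its minimum is now $0$. Two facts drive the argument. First, the ordinal preferences are unchanged, so random priority induces exactly the same distribution over matchings on $\mathbf{u}'$ as on $\mathbf{u}$; since every valuation was only weakly decreased, $\EE[\sum_{i} u_i(RP(\mathbf{u})_i)] \geq \EE[\sum_{i} u_i'(RP(\mathbf{u}')_i)]$. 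Second, I would invoke Lemma~\ref{lem:RPopt}: the optimal matching $\mu^{*}$ for $\mathbf{u}$ is realizable as a run of random priority, in which at most one agent---the last to pick---can receive its least preferred item. Evaluating $\mu^{*}$ under $\mathbf{u}'$ therefore changes the value of at most one agent, and by at most $1$, so $w^{*}(\mathbf{u}') \geq w^{*}(\mathbf{u}) - 1$.

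Combining these with Lemma~\ref{lem:RPlower} applied to the unit-range profile $\mathbf{u}'$ gives
\[
\frac{\EE[\sum_{i} u_i(RP(\mathbf{u})_i)]}{w^{*}(\mathbf{u})} \geq \frac{\EE[\sum_{i} u_i'(RP(\mathbf{u}')_i)]}{w^{*}(\mathbf{u})} \geq \frac{C}{\sqrt{n}}\cdot\frac{w^{*}(\mathbf{u}')}{w^{*}(\mathbf{u})} \geq \frac{C}{\sqrt{n}}\cdot\frac{w^{*}(\mathbf{u})-1}{w^{*}(\mathbf{u})} \geq \frac{C}{2\sqrt{n}},
\]
where the last step uses $w^{*}(\mathbf{u}) \geq 2$. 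Since this bound holds for every $[0,1]$ profile, $ar(RP)=\Omega(n^{-1/2})$ in this setting.

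I expect the only genuine subtlety---and the step I would be most careful about---to be the comparison of optima, namely establishing $w^{*}(\mathbf{u}') \geq w^{*}(\mathbf{u}) - 1$. Naively one might worry that zeroing out one value per agent could erode the optimum by as much as $n$, but Lemma~\ref{lem:RPopt} is exactly what precludes this, guaranteeing that in an optimal allocation at most a single agent sits on its least preferred item. Everything else is bookkeeping analogous to Lemma~\ref{lem:RPlowerUS}, and is in fact slightly simpler here: because $\max_j u_i(j)=1$ already holds for every agent, no auxiliary agent or item needs to be appended to enforce the unit-range normalization.
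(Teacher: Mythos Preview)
Your proposal is correct and follows essentially the same approach as the paper: zero out each agent's least preferred item to obtain a genuine unit-range profile, use Lemma~\ref{lem:RPopt} to bound the loss in the optimum by $1$, and then apply Lemma~\ref{lem:RPlower}. The only cosmetic difference is that the paper avoids your explicit case split by writing $w^{*}(\mathbf{u}) \leq w^{*}(\mathbf{u}')+1 \leq 2w^{*}(\mathbf{u}')$ (using $w^{*}(\mathbf{u}')\geq 1$, which holds since some agent can get its top item), whereas you separate off the regime $w^{*}(\mathbf{u})<2$; both routes yield the same $C/(2\sqrt{n})$ bound.
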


\begin{proof}
Let $\mathbf{u}$ be any $[0,1]$ valuation profile and let $C$ be the constant in the lower bound of Lemma \ref{lem:RPlower}. Similarly to the proof of Lemma \ref{lem:RPlowerUS}, notice that by Lemma \ref{lem:RPopt}, the optimal matching on $\mathbf{u}$ matches at most one agent with its least-preferred item. So let $\mathbf{u'}$ be the valuation profile in which each agent $i$ has the same valuation for every item as in profile $\mathbf{u}$, except the valuation for its least preferred item is set to $0$ (if it is not already $0$). Doing this, the expected social welfare of random priority becomes smaller, and $w^{*}(\mathbf{u'}) \geq w^{*}(\mathbf{u}) - 1$. Notice that $\mathbf{u'}$ is now unit-range and by Lemma \ref{lem:RPlower}, we get that

\begin{eqnarray*}
\frac{\EE\left[\sum_{i=1}^n u_i(RP(\mathbf{u})_i)\right]}{w^{*}(\mathbf{u})} 
	&\geq& \frac{\EE\left[\sum_{i=1}^n u_i(RP(\mathbf{u'})_i)\right]}{w^{*}(\mathbf{u'}) + 1}\\
	&\geq& \frac{\EE\left[\sum_{i=1}^n u_i(RP(\mathbf{u'})_i)\right]}{2w^{*}(\mathbf{u'})} \\
	&\geq& C/2\sqrt{n}
\end{eqnarray*}\hfill $\square$
\end{proof}

\subsection{Improved approximations for $n=3$}

Theorem \ref{thm:main} implies that random priority is indeed the best truthful-in-expectation mechanism for the problem, when only considering the asymptotic behavior of mechanisms. We now consider non-asymptotic behavior by studying the case when $n=3$ and present an non-ordinal mechanism that achieves better bounds than any ordinal mechanism, when the representation of the valuation functions is unit-range. 

Since $n=3$ and the representation is unit-range, the valuation function of an agent $i$ can be completely specified by a tuple $(ABC,\alpha_i)$ where $ABC$ is the ordering of items $1,2$ and $3$ and $\alpha_i$ is the valuation of agent $i$ for its second to most preferred item. For valuation profile, the valuation function $u_i(1)=0.6, u_i(2)=1, u_i(3)=0$ can be written as $(213,0.6)$. By this, we can generate all possible valuation profiles with three agents and three items using $\alpha_1,\alpha_2,\alpha_3$ as variables. By anonymity and neutrality, we can compress the search space drastically (by pruning symmetric profiles) and then calculate the ratios on all valuation profiles as functions of $\alpha_1,\alpha_2,\alpha_3$.

For the case of random priority, it is easy to see where those ratios are minimized and the approximation ratio is the worst ratio over all valuation profiles that we consider. It turns out that the approximation ratio of random priority for $n=3$ is $2/3$. In fact, random priority achieves the optimal approximation ratio among all ordinal mechanisms. To see this, observe that the worst-case ratio of random priority is given by the following ordered valuation profile:
\[
 \mathbf u = \left(\begin{array}{ccc}
1 & 1-\epsilon & 0 \\
1 & \epsilon & 0 \\
1 & \epsilon & 0
\end{array}\right) 
\]
Notice that when $\epsilon$ tends to $0$, the ratio of any ordinal mechanism on $\mathbf{u}$ tends to $2/3$. 

Next, consider the one-agent mechanism that given the reported valuation function matches the agent with its most preferred item with probability $(6-2\alpha^3)/8$, with its second to most preferred item with probability $(1+3\alpha^2)/8$ and with its least preferred item with probability $(1-3\alpha^2+2\alpha^3)/8$. This mechanism, that we will refer to as the \emph{cubic lottery} was presented in \cite{FT:10} and proven by the authors to incentivize the agent to report truthfully. Now consider the following mechanism for the one-sided matching problem:

\begin{mechanism}{(Hybrid mechanism - HM)}
Uniformly at random fix a permutation $\sigma \in S$ of the agents. Match agent $\sigma(1)$ with item $j \in \{1,2,3\}$ with probabilities given by the cubic lottery. Match agent $\sigma(2)$ with its favorite item from the set of still available items. Match agent $3$ with the remaining item.
\end{mechanism}

Since the permutation of agents is fixed uniformly at random, this mechanism is truthful-in-expectation. We prove the following theorem.

\begin{theorem}
$ar(HM) = 0.699$. 
\end{theorem}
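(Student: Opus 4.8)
The plan is to turn the statement into a finite optimization by exploiting the symmetry of $HM$ together with the fact that $n=3$. Because $HM$ draws the priority order uniformly at random, and because both the cubic lottery and the greedy steps are invariant under relabeling of the items, $HM$ is anonymous and neutral. By Lemma~\ref{lem:anonymous} together with neutrality it therefore suffices to evaluate the ratio on the canonical profiles described above, each of which is specified by a triple of item orderings together with the three second-preference values $\alpha_1,\alpha_2,\alpha_3 \in [0,1]$. Up to permuting agents (anonymity) and items (neutrality) there are only finitely many ordinal configurations, so the first step is to enumerate these after discarding symmetric duplicates.

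For each fixed ordinal configuration I would write the expected social welfare of $HM$ as an explicit polynomial $W(\alpha_1,\alpha_2,\alpha_3)$: one averages over the six priority orders, and for each order over the three outcomes of the cubic lottery applied to the first agent, with the orderings alone determining the greedy choices of the remaining two agents; the degree is controlled by the lottery probabilities $(6-2\alpha^3)/8$, $(1+3\alpha^2)/8$, $(1-3\alpha^2+2\alpha^3)/8$. The optimal welfare $w^{*}(\mathbf u)$ is the maximum over the six matchings of an affine function of the $\alpha_i$, hence piecewise linear, cutting $[0,1]^3$ into finitely many regions on which the optimal matching is fixed. On each such region the ratio $W/w^{*}$ is a ratio of a low-degree polynomial to an affine function, which I would minimize over the box by combining interior stationarity conditions with a check of the region faces and the cube boundary; the global minimum over all configurations and regions is the claimed infimum $0.699$, giving both $ar(HM)\ge 0.699$ and, by exhibiting the minimizing configuration and parameters, the matching bound $ar(HM)\le 0.699$. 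Two sanity checks guide the search: on any profile with three identical agents all three lottery branches leave the remaining items to be filled greedily, so the welfare equals $w^{*}$ and the ratio is $1$; and the old $RP$ worst case is no longer worst for $HM$ (a direct computation gives a ratio above $0.7$ there), so the extremal profile for $HM$ must be genuinely asymmetric.

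The main obstacle is the size and nonlinearity of this case analysis. Unlike the ordinal lower-bound arguments, where a single explicit profile together with linear bounds sufficed, here there are many ordinal configurations, each further subdivided by which matching is optimal, and on each piece one must minimize a genuinely nonlinear fractional function of three variables rather than read off a closed form. This is the step where computer assistance---symbolically generating the polynomials $W$ and numerically or symbolically minimizing the resulting ratios---is essentially unavoidable, after which the single worst-case profile it identifies is checked by hand to pin down the exact value $0.699$.
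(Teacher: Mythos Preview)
Your proposal is correct and follows essentially the same route as the paper: exploit anonymity and neutrality to reduce to finitely many ordinal configurations parametrized by $(\alpha_1,\alpha_2,\alpha_3)\in[0,1]^3$, express the ratio on each as a nonlinear fractional function, and compute the global minimum with computer assistance. Your account is in fact slightly more careful than the paper's own proof (you make explicit the piecewise-linear structure of $w^{*}$ and add sanity checks), but the underlying argument is the same computer-assisted case analysis.
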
 

\begin{proof}
Observe that the mechanism is anonymous and neutral, hence we can follow the same procedure described above and generate all possible valuation profiles with $n=3$ and then prune the profile space to obtain a relatively small number of valuation profiles. The ratio on a valuation profile $\mathbf{u}$ will be a function of the form $G(\alpha_1,\alpha_2,\alpha_3)=g_1(\alpha_1,\alpha_2,\alpha_3)/g_2(\alpha_1,\alpha_2,\alpha_3)$ where $g_1:V_1 \times V_2 \times V_3 \rightarrow  \RR$ is a non-linear function corresponding to the expected social welfare and $g_2:V_1 \times V_2 \times V_3 \rightarrow \RR$ is a linear function corresponding to the maximum weight matching on $\mathbf{u}$. Then, to calculate the approximation ratio, we need to solve a non-linear program of the form ``minimize $G(\alpha_1,\alpha_2,\alpha_3)$ subject to $\alpha_1,\alpha_2,\alpha_3 \in [0,1]$'' for every valuation profile. The minimum over all valuation profiles is the approximation ratio of the mechanism. We use standard non-linear programming software to obtain the bound; we consider such a computer-assisted proof sufficient for the purpose of this subsection.  \hfill $\square$
\end{proof}

Notice that approximation ratio achieved by the hybrid mechanism is strictly larger than the approximation ratio of any ordinal mechanism.  The next question would be whether we can prove similar bounds for other (small) values of $n$. We might be able to extend the technique used above to $n=4$ by relying heavily on computer-assisted programs to generate the valuation profiles and calculate the ratios but it would be difficult to extend it to any larger number of agents, since the valuation profile space becomes quite large. A different approach for proving approximation guarantees for concrete values of $n$ would be interesting. Finally, it would be interesting to investigate whether random priority obtains the (non-asymptotic) optimal approximation ratio among ordinal mechanisms for all values of $n$.

\newpage

\bibliographystyle{plain}
\bibliography{RP}

% Bibliography
                             % Sample .bib file with references that match those in
                             % the 'Specifications Document (V1.5)' as well containing
                             % 'legacy' bibs and bibs with 'alternate codings'.
                             % Gerry Murray - March 2012

\end{document}